\newenvironment{pf}{\begin{proof}}{\end{proof}}
\newenvironment{assum}{\begin{assumption}}{\end{assumption}}
\newenvironment{defn}{\begin{definition}}{\end{definition}}
\newenvironment{prob}{\begin{problem}}{\end{problem}}
\newenvironment{thm}{\begin{theorem}}{\end{theorem}}
\newcommand{\norm}[1]{\left\lVert#1\right\rVert}
\newcommand{\R}{{\mathbb{R}}}
\newcommand{\B}{{\mathcal B}}
\newcommand{\N}{{\mathbb{N}}}
\newcommand{\T}{{\mathbf{T}}}
\newcommand{\So}{{\mathbf{S}}}
\newcommand{\Obs}{{\mathcal{U}}}
\newcommand{\U}{{\mathbf{U}}}
\newcommand{\cen}{\mathbf{c}}
\newcommand{\rad}{\mathbf{r}}
\newcommand{\soc}{{s}_a}
\newcommand{\Ag}{{\mathcal{A}}}
\newcommand{\card}[1]{{\text{card}(#1)}}
\newcommand{\cmark}{\ding{51}}%
\newcommand{\xmark}{\ding{55}}%
\newtheorem{theorem}{Theorem}[section]
\newtheorem{assumption}{Assumption}
\newtheorem{definition}[theorem]{Definition}
\newtheorem{lemma}[theorem]{Lemma}
\newtheorem{remark}[theorem]{Remark}
\newtheorem{problem}[theorem]{Problem}
\newtheorem{proof}[theorem]{Proof}
\title{Incorporating Social Awareness into Control of Unknown Multi-Agent Systems: A Real-Time Spatiotemporal Tubes Approach
\thanks{ S. Upadhyay, R. Das, and P. Jagtap  are with the Robert Bosch Centre for Cyber-Physical Systems, IISc, Bangalore, India {\tt\footnotesize \{siddharthau,ratnangshud,pushpak\}@iisc.ac.in}}
}
\author{
 Siddhartha Upadhyay $^\dag$ \\
  Robert Bosch Centre for Cyber-Physical Systems\\
  IISc, Bengaluru, India\\
  \texttt{siddharthau@iisc.ac.in} \\
   \And
 Ratnangshu Das \thanks{Authors contributed equally.} \\
  Robert Bosch Centre for Cyber-Physical Systems\\
  IISc, Bengaluru, India\\
  \texttt{ratnangshud@iisc.ac.in} \\
   \And
 Pushpak Jagtap \\
  Robert Bosch Centre for Cyber-Physical Systems\\
  IISc, Bengaluru, India\\
  \texttt{pushpak@iisc.ac.in} \\
}
\begin{document}
\maketitle
\setcounter{footnote}{0}
\begin{abstract}                          
This paper presents a decentralized control framework that incorporates social awareness into multi-agent systems with unknown dynamics to achieve prescribed-time reach-avoid-stay tasks in dynamic environments. Each agent is assigned a social awareness index that quantifies its level of cooperation or self-interest, allowing heterogeneous social behaviors within the system. Building on the spatiotemporal tube (STT) framework, we propose a real-time STT framework that synthesizes tubes online for each agent while capturing its social interactions with others. A closed-form, approximation-free control law is derived to ensure that each agent remains within its evolving STT, thereby avoiding dynamic obstacles while also preventing inter-agent collisions in a socially aware manner, and reaching the target within a prescribed time. The proposed approach provides formal guarantees on safety and timing, and is computationally lightweight, model-free, and robust to unknown disturbances. The effectiveness and scalability of the framework are validated through simulation and hardware experiments on a 2D omnidirectional
\end{abstract}
\keywords{Social-Awareness; Multi-Agent System; Altruistic; Egoistic; Approximation Free; Unknown Dynamics.}

\section{Introduction}
Multi-agent systems have attracted considerable research interest in recent years due to their ability to handle complex tasks through cooperative interactions among agents. They had been widely used in areas such as search and rescue \cite{Search}, safety-critical human-robot interaction in healthcare and medical assistance \cite{chakraborty2014medical}, autonomous driving \cite{auto_driving}, cooperative exploration \cite{juvvi2024safe}, and many other domains \cite{dorri2018multi}.

One of the most important aspects in multi-agent systems is ensuring safe interactions among agents, particularly avoiding inter-agent collisions. In the literature, this challenge is often addressed through safe control strategies that assume fully cooperative and symmetric interactions among agents. For example, \cite{van2008reciprocal} introduces the concept of reciprocal velocity obstacles for real-time multi-agent navigation. Barrier certificate–based methods have also been widely explored to ensure safe multi-robot coordination. For example, in \cite{multi_barrier}, safety barrier certificates are used to guarantee collision-free operation in multi-robot systems. Similarly, \cite{multi_drone} extends this concept to teams of differentially flat quadrotors, enabling collision-free maneuvers. \cite{multi_prob} introduces Probabilistic Safety Barrier Certificates (PrSBC), which leverage control barrier functions to handle the challenges of collision avoidance under uncertainty. Although these methods provide provable safety guarantees, they require solving optimization problems that become computationally expensive as system dimensionality and the number of agents increase. Moreover, they rely on known system dynamics, which is often unavailable for real-world systems. The spatiotemporal tube (STT) framework \cite{das2024prescribed}, \cite{das2024spatiotemporal} has emerged as an effective approach for solving complex task specifications \cite{das2025approximation} without requiring exact knowledge of system dynamics. Building on this idea, \cite{multi_stt} introduces a multi-agent negotiation framework that ensures collision-free motion while satisfying prescribed-time reach-avoid-stay (PT-RAS) tasks. Subsequently, \cite{TCNS} extends this framework to handle general time-varying unsafe sets and achieves smoother control performance. However, these existing STT-based frameworks for multi-agent systems synthesize the tubes offline, either through a negotiation-based procedure \cite{multi_stt} or by solving computationally intensive optimization problems \cite{TCNS}. Moreover, they require \emph{a priori} knowledge of obstacle positions and motion. As a result, these approaches are not suitable for dynamic environments that require real-time replanning.

However, these approaches neglect an important aspect of real-world multi-agent interactions: agents may differ in their cooperation levels. Assuming symmetric interactions is often unrealistic as agents have different social awareness and personality \cite{schwarting2019social}. For instance, a robot delivering groceries is expected to have higher social awareness as compared to an agent urgently delivering medicine. Similarly, as shown in Figure \ref{fig:soc}, at a road intersection, a fire-truck may move straight ahead, while a commercial vehicle should adjust its behavior to prevent collisions. Thus, social awareness can be viewed as a personality trait, and modeling heterogeneous interactions among agents leads to a more realistic representation of such systems.
One common way to quantify social preference or personality in a multi-agent framework is through the concept of Social Value Orientation (SVO) \cite{murphy2011measuring}. It is an idea originating from sociology and psychology, which measures the degree of an agent’s selfishness or altruism. 

In recent years, several works have proposed integrating the concept of SVO into control frameworks, with the aim of modeling heterogeneous interactions among agents. In \cite{pierson2020weighted}, the authors introduce a Weighted Buffered Voronoi tessellation for semi-cooperative multi-agent navigation policies with guarantees on collision avoidance. However, this approach is limited to single-integrator dynamics, and constructing Voronoi cells becomes computationally expensive for higher-dimensional systems. In \cite{he2024priority}, the authors use a similar approach to assign priority levels to the agents to resolve deadlocks. Control barrier function (CBF)-based methods have also been used to solve this problem. For example, \cite{lyu2022responsibility} computes the relative personality differences between agents from a global personality score and designs a CBF-based safe control law. In \cite{risk_aware}, a risk-aware decentralized framework is proposed, where each agent is assigned a proportion of responsibility for collision avoidance, ensuring safe and efficient motion without direct communication. The work in \cite{chandra2023socialmapf} tackles the multi-agent path-finding problem in socially aware settings using an auction-based mechanism to resolve conflicts among agents. But again, these methods depend on state-space discretization or require solving optimization problems at each step, which limits their scalability to higher-dimensional systems and relies on exact knowledge of the system dynamics.


In this work, we explore the integration of social preferences or personality information into multi-agent systems operating under unknown dynamics. The main contributions of this paper are summarized as follows:
\begin{itemize}
    \item We propose a \textit{real-time spatiotemporal tube (STT)} framework for decentralized multi-agent systems with unknown dynamics to solve temporal reach-avoid-stay (TRAS) tasks in dynamic environments.    
    \item We introduce a \textit{social awareness index} to model heterogeneous social behaviors among agents and incorporate it into the STT design to capture socially-aware inter-agent interactions.    
    \item We derive an \textit{approximation-free}, \textit{closed-form control law} that constrains each agent within its evolving STT, thereby ensuring collision avoidance, obstacle avoidance, and prescribed-time target reachability.    
    \item The proposed framework provides formal safety and timing guarantees without requiring offline tube synthesis or \textit{a priori} knowledge of obstacle trajectories, while remaining computationally lightweight and suitable for real-time implementation.
    \item We validate the effectiveness of the proposed approach through extensive simulation and hardware experiments involving 2D omnidirectional mobile robots and 3D aerial vehicles.
\end{itemize}

\begin{figure}[t]
    \centering
    \includegraphics[width=0.42\textwidth]{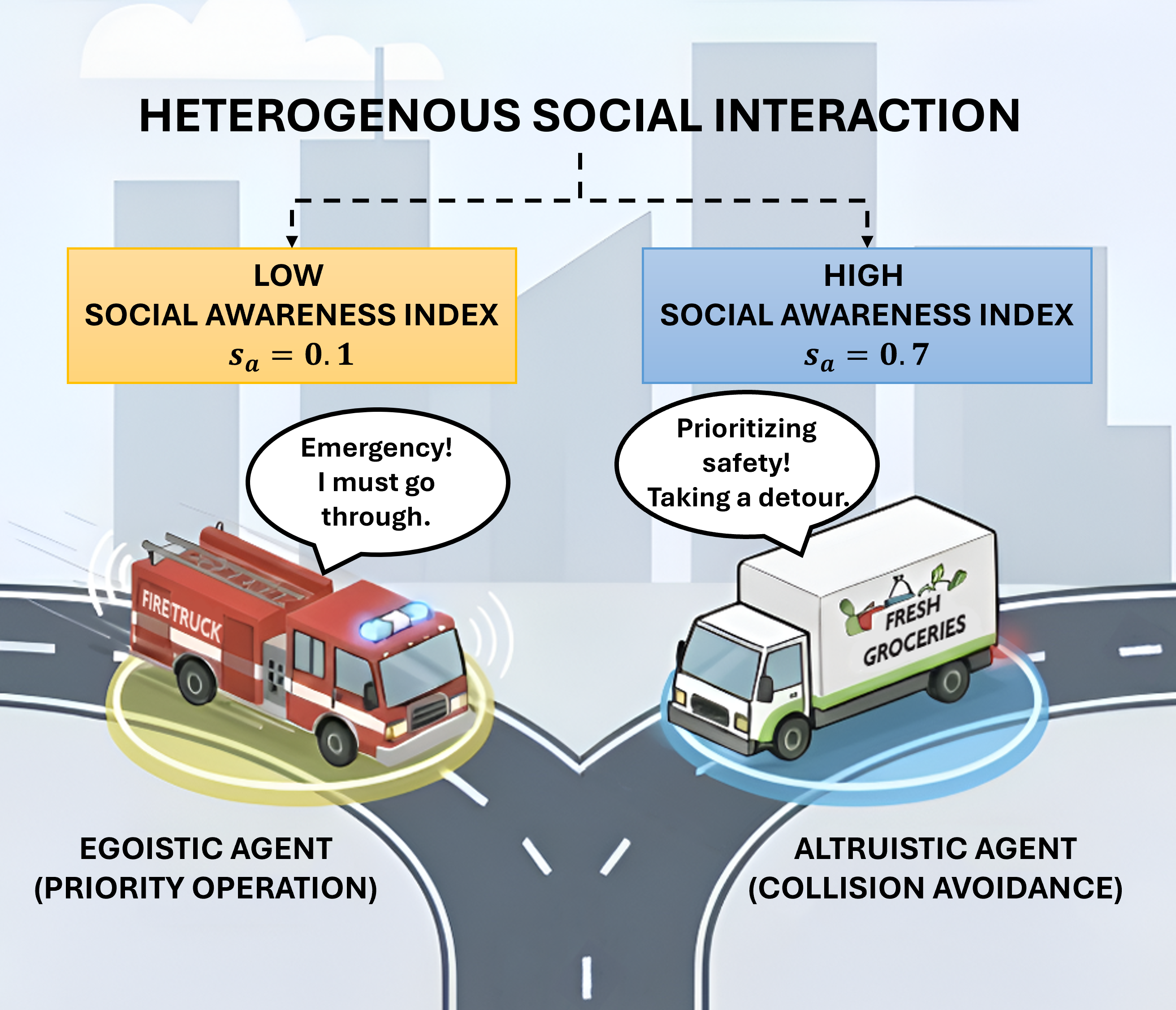}
    \caption{Interaction between an egoistic (high-priority) fire truck and an altruistic (collision-avoiding) grocery vehicle.}
    \label{fig:soc}
\end{figure}

\section{Preliminaries and Problem Formulation}
\label{sec:prelim}
\subsection{Notation}
The symbols $\N$, $\R$, $\R^+$, and $\R_0^+ $ denote the set of natural, real, positive real, and nonnegative real numbers. 
For $a,b\in\R$ and $a< b$, we use $(a,b)$ to represent open interval in $\R$ and $[a,b]$ to represent closed interval in $\R$. For $a,b\in\N$ and $a\leq b$, we use $[a;b]$ to denote close interval in $\N$.
The cardinality of a set \textbf{A} is denoted by $\card{\textbf{A}}$.
Given $N \in \N$ sets $\textbf{A}_i$, $i\in\left[1;N\right]$, we denote the Cartesian product of the sets by $\textbf{A}=\prod_{i\in\left[1;N\right]}\textbf{A}_i:=\{(a_1,\ldots,a_N)|a_i\in \textbf{A}_i,i\in\left[1;N\right]\}$. 
The space of bounded continuous functions is denoted by $\mathcal{C}$. 
A ball centerd at $\cen \in \mathbb{R}^n$ with radius $\rad \in \mathbb{R}^+$ is defined as $\mathcal{B}(\cen, \rad) := \{ x \in \mathbb{R}^n \mid \|x - \cen\| \leq \rad \}$. We use $x\circ y$ to represent the element wise multiplication where $x,y\in \R^n$. We use $I_n$ to represent identity matrix of order $n\in \N$.
All other notations in this paper follow standard mathematical conventions.

\subsection{System Definition}

We consider a Multi-Agent System (MAS) network with a set of agents $\Ag $, where the number of agents is denoted by $n_a:=\card{\Ag } \in \N $. The network is represented as 
\begin{align} \label{eqn:sysdyn}
    \Sigma=\Big\{\Sigma^{(1)},\Sigma^{(2)},\ldots,\Sigma^{(n_a)}\Big\}.
\end{align} 
where the $k^{th}$ agent $\Sigma^{(k)}$ is defined as a control-affine, multi-input multi-output (MIMO), nonlinear strict-feedback system
\begin{align} \label{eqn:sysdyn_ind}
    &\dot{x}_z^{(k)}\!=\!f_z^{(k)}(\overline{x}_z^{(k)})\!+\!g_z^{(k)}(\overline{x}_z^{(k)})x_{z+1}^{(k)}\!+\!w_z^{(k)}, z\in [1;N-1], \notag\\
    &\dot{x}_{N}^{(k)}\!=\!f_N^{(k)}(\overline{x}_N^{(k)})\!+\!g_N^{(k)}(\overline{x}_N^{(k)})u^{(k)}\!+\!w_N^{(k)},\notag\\
    &y^{(k)}\!=\!x_1^{(k)},
\end{align}
where for each $t\in\R^+_0,k \in \Ag$ and $z\in[1;N]$,
\begin{itemize}
    \item $x_z^{(k)}(t)= [x_{z,1}^{(k)}(t), \ldots, x_{z,n}^{(k)}(t)]^\top \in \mathbb{R}^{n}$ is the state,
    \item $\overline{x}_z^{(k)}(t) := [x_1^\top(t),...,x_z^\top(t)]^\top \in  \mathbb{R}^{nz} $,
    \item $u^{(k)}(\overline{x}_z^{(k)},t) \in \mathbb{R}^n$ is control input vector,
    \item $w_z^{(k)}(t) \in \mathbf{W} \subset \R^n$ is unknown bounded disturbance, and
    \item $y^{(k)}(t) = [x_{1,1}^{(k)}(t), \ldots, x_{1,n}^{(k)}(t)]\in \R^n$ is the output.
\end{itemize}

The functions $f_z^{(k)}: \R^{nz}\rightarrow \mathbb{R}^n$ and $g_z^{(k)}:  \R^{nz} \rightarrow \mathbb{R}^{n \times n}$ satisfy the following assumptions:
\begin{assumption}\label{assum:lip}
    For all $k\in\Ag \text{ and }z \in [1;N]$, the functions $f_z^{(k)}$ and $g_z^{(k)}$ are \emph{unknown} but locally Lipschitz continuous.
\end{assumption}
\begin{assum}[\cite{PPC1},\cite{PPC0}] \label{assum:pd}
    For all $\overline{x}_z^{(k)} \in \R^{nz}$, the symmetric part of $g_z^{(k)}$,  which is defined as $g_{z,s}^{(k)}(\overline{x}_z^{(k)}) :=\frac{g_z^{(k)}(\overline{x}_z^{(k)})+g_z^{(k)}(\overline{x}_z^{(k)})^\top}{2}$ is uniformly sign definite with known sign. Without loss of generality, we assume $g_{z,s}^{(k)}(\overline{x}_z^{(k)})$ is positive definite, that is, there exists a constant $\underline{g_z}^{(k)}\in\mathbb R^+, \forall k \in \Ag ,\forall z \in [1;N]$ such that
    $$0 < \underline{g_z}^{(k)} \leq \lambda_{\min} (g_{z,s}^{(k)}(\overline{x}_z^{(k)})), \forall \ \overline{x}_z^{(k)} \in \R^{nz},$$
    where $\lambda_{\min}(\cdot)$ denotes the smallest eigenvalue of a matrix.
\end{assum}
This assumption ensures that global controllability is guaranteed in \eqref{eqn:sysdyn}, i.e., $g_{z,s}^{(k)}(\overline{x}_z^{(k)}) \neq 0,$ for all $\overline{x}_z^{(k)} \in  \R^{nz}$. 
\begin{subsection}{Socially Aware Multi-Agent System (SA-MAS)}
    We modify the MAS definition in \eqref{eqn:sysdyn} to formally incorporate heterogeneous interactions among agents. The extended definition is referred to as a Socially Aware Multi-Agent System (SA-MAS) and is defined as follows:
\begin{align} \label{eqn:social_mas}
    \Sigma_s=\Big\{(\Sigma^{(1)},\soc^{(1)}),(\Sigma^{(2)},\soc^{(2)}),\ldots,(\Sigma^{(n_a)},\soc^{(n_a)})\Big\},
\end{align}
where $\Sigma^{(k)}$ is the system dynamics of each agent as defined in \eqref{eqn:sysdyn_ind} and $\soc^{(k)} \in (0,1)$ is a given \emph{social awareness index} for agent $k \in \Ag $, quantifying its social preference and priority level. 
A smaller value of $\soc^{(k)}$ corresponds to a more \emph{egoistic} agent that prioritizes its own task with less sensitivity towards others, whereas a larger value indicates an \emph{altruistic} agent that is more flexible and cooperative. 
The social awareness index can be assigned based on factors such as task priority, i.e., agents with high priority tasks may have a higher social awareness index. The value of $\soc^{(k)}$ may also reflect an intrinsic characteristic of an agent, such as personality. A more detailed discussion can be found in Section~\ref{sec:sif}.
\end{subsection}

\subsection{Problem Formulation}

\begin{defn}[Temporal Reach Avoid Stay (TRAS)]\label{def:satras}
We say that the \textit{Socially Aware Multi-Agent System \textbf{(SA-MAS)}} $\Sigma_s$ with agents in Equation~\eqref{eqn:sysdyn_ind} satisfies the \textbf{TRAS}
\footnote{The avoid condition collectively ensures that each agent remains collision-free with both dynamic obstacles and other agents at all times.}
task if for each agent $k \in \Ag $, characterized by social awareness index $\soc^{(k)}\in (0,1)$, the following conditions hold 
\begin{align}\label{eqn:satras}
   &y^{(k)}(0) \in \So^{(k)}, \  y(t) \in \T^{(k)}, \forall t \in [t_c^{(k)}, \infty) \\
   &y^{(k)}(t) \notin \U(t), \forall t\in \R_0^+, \\ 
   &y^{(k)}(t)\neq y^{(l)}(t), \forall t\in \R_0^+, \forall l \in\Ag  \setminus \{k\}, 
\end{align}
where $\Ag $ is the set of agents, and $\U: \R_0^+ \rightarrow \R^n$ is a time-varying unsafe set. For each agent $k \in \Ag $,
$t_c^{(k)} \in \R^+$ is the prescribed completion time, $\So ^{(k)}\subset \R^n\setminus \U(0)$ is the initial set, and $\T^{(k)}\subset \R^n \setminus \bigcup_{t\in [t_c^{(k)}, \infty)} \U(t)$ is the  target set. 
\end{defn}

\begin{assum}
    We define the unsafe set $\U(t)\subset \R^n$ as the union of $n_o$ time-varying balls, each centerd around an individual obstacle: 
    $$\U(t) = \bigcup_{j=1}^{n_o} \Obs^{(j)}(t), \text{ where } \Obs^{(j)}(t):= \mathcal{B}(o^{(j)}(t), \rad_o^{(j)}(t)).$$ 
    Here, $\Obs^{(j)}(t)$ is a closed ball of radius $\rad_o^{(j)}(t) \in \R_0^+$ centerd at $o^{(j)}(t) \in \R^n$, capturing the region surrounding the $j^{th}$ dynamic obstacle. Since these regions are defined independently, it allows modeling multiple disconnected and dynamically evolving unsafe regions.
\end{assum}

We now formally state the main control problem addressed in this work.

\begin{prob}[Real-time TRAS]\label{prob1}
Given the SA-MAS, $\Sigma_s$, in Equation~\eqref{eqn:social_mas}, under Assumptions \ref{assum:lip} and \ref{assum:pd}, and a TRAS task as defined in Definition \ref{def:satras}, synthesize a \textit{real-time}, \textit{approximation-free}, and \textit{closed-form} control law $u(t)$ that guarantees the output trajectory $y^{(k)}(t),\forall k \in \Ag $ satisfies the TRAS specification while respecting associated social awareness index.
\end{prob}

To solve this problem, we utilize the STT framework \cite{das2024spatiotemporal}, which defines a time-varying region in the output space that remains safe and goal-directed throughout the horizon.
\begin{defn}[STT for TRAS]\label{STT-defination}
    Given a \textbf{TRAS} task in Definition~\ref{def:satras}, a spatiotemporal tube  (STT), $\Gamma^{(k)}$ for each agent $k\in \Ag $ is defined by 
    $$\Gamma^{(k)}(t)=\B(\cen^{(k)}(t),\rad^{(k)}(t)),$$
    where the tube is characterized by a time varying center $\cen^{(k)}:\mathbb{R}_0^+ \rightarrow\mathbb{R}^n$ and radius $\rad^{(k)}:\R^+_0\rightarrow\R^+$, if the following holds for all $k \in \Ag $:
     \begin{subequations}
        \begin{align}
            &\rad^{(k)}(t) \in\mathbb{R}^+, \forall t \in\R_0^+,\\
            &\Gamma^{(k)}(0) \subset \mathbf{S}^{(k)}, \quad \Gamma(t) \subset \mathbf{T}^{(k)}, \forall t \in [t_c^{(k)},\infty),\\
            &\Gamma^{(k)}(t) \cap \U(t)=\emptyset, \forall t\in\R_0^+,\\
            &\Gamma^{(k)}(t) \cap \Gamma^{(l)}(t)=\emptyset,\forall t\in\R_0^+,\forall l \in \Ag  \setminus \{k\}.    
 \end{align}
    \end{subequations}
\end{defn}

The TRAS specification can be satisfied by designing a control law independently for each agent $k \in \Ag $, such that the output trajectories remain within the corresponding STTs:
\begin{align} \label{eqn:stt_con}
   y^{(k)}(t) \in \Gamma^{(k)}(t), \forall t \in \R_0^+, \forall k\in \Ag.
\end{align}

\section{Designing Spatiotemporal Tubes (STTs)} \label{sec:tube}
In this section, we introduce the construction of spatiotemporal tubes (STTs) for SA-MAS $\Sigma_s$ in Equation~\eqref{eqn:social_mas} to ensure the satisfaction of the Temporal Reach-Avoid-Stay (TRAS) specification in Definition \eqref{def:satras}.
We first develop the design for an arbitrary agent $k\in \Ag $, noting that the same reasoning extends to all other agents. The discussion begins with preliminary definitions and a separation assumption for safety. We then introduce the dynamics of the tube centers and radii and provide the intuition underlying their formulation. Next, we incorporate the social awareness index of each agent into these dynamics to capture heterogeneous agent behaviors. Finally, we present the main theorem and its proof, which formally guarantees that the constructed STTs satisfy the TRAS specification.

\subsection{Preliminaries of STT Design}

We start by selecting the following two points 
\begin{align}
    s^{(k)}=[s^{(k)}_1,...,s_n^{(k)}]\in int(\So^{(k)}),\
    \eta^{(k)}=[\eta^{(k)}_1,...,\eta_n^{(k)}]\in int(\T^{(k)}).\notag
\end{align}
 We define the sets $\hat\So^{(k)}$ and $\hat\T^{(k)}$ as closed balls centerd at the points $s^{(k)}$ and $\eta^{(k)}$, with radii $d_S^{(k)},d_T^{(k)}\in \R^+$, respectively:
\begin{align}
    \hat\So^{(k)} &\!=\!\B(s^{(k)},d_S^{(k)})\!:=\!\{x'\!\in\! \mathbb{R}^n|\norm{x'\!-\!s^{(k)}}\!\leq\! d_S^{(k)}\}\label{eqn:start_state}\\
    \hat\T^{(k)} &\!=\!\B(\eta^{(k)},d_T^{(k)})\!:=\!\{x'\!\in \!\mathbb{R}^n|\norm{x'\!-\!\eta^{(k)}}\!\leq\! d_T^{(k)}\}\label{eqn:targ_state}
\end{align}
such that $\hat \So^{(k)}\subset\So^{(k)}$ and $\hat\T^{(k)}\subset \T^{(k)}$. As introduced in Definition~\ref{STT-defination}, the STT $\Gamma^{(k)}(t) = \mathcal{B}(\cen^{(k)}(t), \rad^{(k)}(t))$ is defined by a time-varying center $\cen^{(k)}: \R_0^+ \rightarrow \R^n$ and radius $\rad^{(k)}: \R_0^+ \rightarrow \R^+$.
Additionally, to ensure a safe approach to the target, we make the following separation assumption:
\begin{assum}\label{ass_pmin}
    For time $t \in[t_c^{(k)},\infty)$, the STT center $\cen(t_c^{(k)})$ is separated from the unsafe set by a known minimum distance of $\rad^{(k)}_{max}\in \R^+$, i.e.,
    $\forall x \in \U (t_c^{(k)}), \|x - \cen^{(k)}(t_c^{(k)})\| >  \rad^{(k)}_{max}$, where $\rad^{(k)}_{max}$ is the maximum allowable tube radius for the $k^{th}$ agent and is choosen such that $\rad^{(k)}_{max}\leq \min(d_S^{(k)},d_T^{(k)})$. 
    Additionally, for all distinct pairs of agents $\{k,l\}\in \Ag ,k\neq l$ the initial set and target set is separated by at least a known minimum distance $\rad^{(k,l)}_{max}=\rad^{(l)}_{max}+\rad_{max}^{(k)}$.
\end{assum}
 These assumptions ensure that agents are not initialized or assigned targets too close to each other or to the unsafe set.
 
 In this decentralized framework, each agent $k \in \Ag$ uses only the information available within its local sensing range. We characterize this information through the time-varying sets of neighboring obstacles $\mathcal{N}_o^{(k)}(t)$ and neighboring agents $\mathcal{N}_a^{(k)}(t)$, defined as:
\begin{align} \label{eqn:neighborhood}
    \mathcal{N}_o^{(k)}(t) &= \{ j \in \{1, \dots, n_o\},  \quad \text{such that,} \| \cen^{(k)}(t) - o^{(j)}(t) \| - \rad_o^{(j)}(t) \leq \rad^{(k)}_{max} \}, \notag \\
    \mathcal{N}_a^{(k)}(t) &= \{ l \in \Ag \setminus \{k\} , \quad \text{such that,} \| \cen^{(k)}(t) - \cen^{(l)}(t) \| \leq \rad^{(k,l)}_{max} \}.
\end{align}
where $\rad^{(k)}_{max}$ acts as a sensing range of the agent $k$.
\subsection{STT center Dynamics}

The evolution of the center $\cen^{(k)}(t)$ of the tube, starting at $\cen^{(k)}(0)=s^{(k)}$, for each agent $k \in \Ag $, is governed by the following dynamics:
\begin{align}\label{eqn:cen_dynamic}
   & \dot{\cen}^{(k)}(t)=\gamma^{(k)}(\cen^{(k)},t) +  \sum_{j \in \mathcal{N}_o^{(k)}(t)}\Big (h_{2,j}^{(k)}m_j^{(k)}(t) + h_{3,j}^{(k)}v_j^{(k)}(t) \Big )  \alpha_j^{(k)}(t) +\\
    &\sum_{l \in \mathcal{N}_a^{(k)}(t)}\Big (\hat h_{2}^{(k,l)}\hat m^{(k,l)}(t) + \hat h_{3}^{(k,l)}\hat v^{(k,l)}(t) \Big ) \beta^{(k,l)}(t)\phi^{(k,l)}(\soc,t).\notag
\end{align}
The first term $\gamma^{(k)}(\cen^{(k)},t)$ is responsible for driving the center trajectory $\cen^{(k)}(t)$ towards the target point  $\eta^{(k)}$, and is defined as follows:
\begin{align}\label{eqn:goal}
\gamma^{(k)}(\cen^{(k)},t) =
\begin{cases}
h_1^{(k)} \dfrac{t_c^{(k)}}{t_c^{(k)} - t}\bigl(\eta^{(k)} - \cen^{(k)}(t)\bigr), & t < t_c^{(k)}, \\[2ex]
0, & t \geq t_c^{(k)}.
\end{cases}
\end{align}
where $h_1^{(k)}>1/t_c^{(k)}$ is a positive constant, dictating the rate of approach towards the goal point $\eta^{(k)}$.

The second term in \eqref{eqn:cen_dynamic} is responsible for collision avoidance with the obstacle and is activated through a switching function $\alpha^{(k)}_j(t)$ when the STT center $\cen^{(k)}(t)$ approaches the $j$-th obstacle: 
\begin{equation*}
    \alpha^{(k)}_j(t)\! =\! 
    \begin{cases}
   \frac{1}{ \| \cen^{(k)}(t)\! -\! o^{(j)}(t) \| \!-\! \rad_o^{(j)}(t)\!}-\frac{1}{\rad^{(k)}_{max}}, &\text{if } \| \cen^{(k)}(t)\! -\! o^{(j)}(t) \| \!-\! \rad_o^{(j)}(t)\! \leq\! \rad^{(k)}_{max}, \\
    0, &\text{otherwise}.
    \end{cases}    
\end{equation*}
The obstacle avoidance term in Equation \eqref{eqn:cen_dynamic} is governed by the positive constants, $h_{2,j}^{(k)},h_{3,j}^{(k)}\in \R^+$, dictating the repulsion strength from the unsafe sets, and the two vectors $m_j^{(k)}(t),v_j^{(k)}(t)\in \R^n$, defined as:
\begin{align}\label{eqn:m_obs}
    m_j^{(k)}(t)\! =\! \frac{\cen^{(k)}(t) - o^{(j)}(t)}{\left( \| \cen^{(k)}(t) \!-\! o^{(j)}(t) \| \!-\! (\rad_o^{(j)}(t)\!+\!\rad^{(k)}_{min})\right)^3}
\end{align}
and $ v_j^{(k)}(t) \in \mathbb{R}^n $ lies in the null space of $m_j^{(k)}(t)$, i.e., $ {m_j^{(k)}}^\top(t) v_j^{(k)}(t) = 0.$ $\rad^{(k)}_{min} \in \R^+$ is the minimum tube radius.  So whenever the STT center approaches the $j$-th unsafe set the switching function gets activated, i.e., $\alpha^{(k)}_j(t)\neq0$ and the vector  $m_j^{(k)}(t)$ in \eqref{eqn:m_obs} together with its orthogonal component $v_j^{(k)}(t)$ steer the tube around the unsafe set, ensuring safety. 

The third term addresses inter-agent collision avoidance between agent $k$ and its neighboring agent $l \in \mathcal{N}^{(k)}_a(t)$. It is activated through a switching function $\beta^{(k,l)}$ when the STT center of the $k$-th agent $\cen^{(k)}(t)$ approaches the STT center of the $l$-th agent:
\begin{equation}\label{eqn:switch_agent}
    \beta^{(k,l)}(t) = 
    \begin{cases}
    \frac{1}{\| \cen^{(k)}(t) - \cen^{(l)}(t) \|}-\frac{1}{\rad_{max}^{(k,l)}},& \text{if } \| \cen^{(k)}(t) - \cen^{(l)}(t) \| \leq \rad_{max}^{(k,l)}, \\
    0, & \text{otherwise},
    \end{cases}    
\end{equation}
where $\rad_{max}^{(k,l)}=\rad_{max}^{(k)}+\rad_{max}^{(l)}$.

Once the switching function is activated, the inter-agent collision avoidance is mainly governed by the arbitrary positive constants $\hat h_{2}^{(k,l)},\hat h_{3}^{(k,l)} \in \R^+$ dictating the repulsion rates from neighboring agent, and the two vectors $\hat m^{(k,l)}(t),\hat v{(k,l)}(t)\in \R^n$, defined as:
\begin{align}\label{eqn:m_agent}
 \hat m^{(k,l)}(t)\! =\! \frac{\cen^{(k)}(t) - \cen^{(l)}(t)}{\left( \| \cen^{(k)}(t) \!-\! \cen^{(l)}(t) \| \!-\! (\rad_{min}^{(l)}\!+\!\rad^{(k)}_{min})\right)^3}, 
\end{align}
and $ \hat v^{(k,l)}(t) \in \mathbb{R}^n $ lies in the null space of $\hat m^{(k,l)}(t)$, i.e., $  \left(\hat{m}^{(k,l)}(t)\right)^{\top} \hat v^{(k,l)}(t) = 0.$ 

The function $\phi^{k,l}(\soc, t)$ is the Social Interaction Function (SIF), determined by the social awareness indices $\soc = (\soc^{(k)}, \soc^{(l)})$ of agents $k$ and $l$, as defined in Subsection \ref{sec:sif}.

\subsection{STT Radius Dynamics}

The tube radius $\rad^{(k)}(t)$ is dynamically adjusted based on the proximity to local neighbors defined in \eqref{eqn:neighborhood}:
\begin{equation}\label{eqn:radius_dynamics}
\dot{\rad}^{(k)}(t)=\frac{\mathsf{e}^{-\nu d_1^{(k)}{(t)}}\dot{ d_1}^{(k)}(t)+\mathsf{e}^{-\nu d_2{(t)}}\dot{ d_2}^{(k)}(t)}{\big ( \mathsf{e}^{-\nu\rad^{(k)}_{max}}+\mathsf{e}^{-\nu d_1^{(k)}(t)}+\mathsf{e}^{-\nu d_2^{(k)}(t)}\big)},
\end{equation}
where $\nu \in \R^+$ is an arbitrary smoothening parameter, and 
$d^{(k)}_1(t)$, $d^{(k)}_2(t)$ represent the smooth minimum distances to the sets $\mathcal{N}_o^{(k)}(t)$, $\mathcal{N}_a^{(k)}(t)$ in \eqref{eqn:neighborhood}, respectively:
\begin{align}
     d_1^{(k)}(t) &= -\frac{1}{\nu}\ln \Big(\sum_{j \in \mathcal{N}_o^{(k)}(t)}\mathsf{e}^{-\nu  d'^{(k)}_j(t)} \Big) \label{eqn:rho} \\
    d_2^{(k)}(t) &= -\frac{1}{\nu}\ln  \Big( \sum_{l \in \mathcal{N}_a^{(k)}(t)} \mathsf{e}^{-\nu  d'^{(k,l)}(t)} \Big)\label{eqn:rad_agent}
\end{align}
and the corresponding time derivatives are given by $\dot d_1^{(k)}(t)$ and $\dot d_2^{(k)}(t)$, by:
\begin{align*}
    \dot d_1^{(k)}(t) &= \frac{\sum_{j \in \mathcal{N}_o^{(k)}(t)} \mathsf{e}^{-\nu d_j'^{(k)}(t)} \, \dot d_j'^{(k)}(t)}{\sum_{j \in \mathcal{N}_o^{(k)}(t)} \mathsf{e}^{-\nu d_j'^{(k)}(t)}}, \
    \dot d_2^{(k)}(t) = \frac{\sum_{l \in \mathcal{N}_a^{(k)}(t)} \mathsf{e}^{-\nu d'^{(k,l)}(t)} \, \dot d'^{(k,l)}(t)}{\sum_{l \in \mathcal{N}_a^{(k)}(t)} \mathsf{e}^{-\nu d'^{(k,l)}(t)}}.
\end{align*}
In above equations $d^{(k)}_1(t)$ is the smooth minimum of the distances between the tube center $\cen^{(k)}(t)$ and each locally sensed obstacle, defined as $d'^{(k)}_j(t)=\| \cen^{(k)}(t) - o^{(j)}(t) \| - \rad_o^{(j)}(t),$ over all $j \in \mathcal{N}_o^{(k)}(t)$. $ d_2^{(k)}(t)$ is the smooth minimum over all locally sensed neighboring agents of $d'^{(k,l)}(t)=\rad_{min}^{(k)}+\Big(\| \cen^{(k)}(t) - \cen^{(l)}(t) \| - \rad_{min}^{(k,l)}\Big)W(\phi^{(k,l)},t),$ 
where $W(\phi^{(k,l)},t)=\Big(1-\frac{\| \cen^{(k)}(t) - \cen^{(l)}(t) \|-\rad_{min}^{(k,l)}}{\rad_{max}^{(k,l)}-\rad_{min}^{(k,l)}}\Big)\big(1-\phi^{(k,l)}(\soc,t)\big)+\Big(\frac{\| \cen^{(k)}(t) - \cen^{(l)}(t) \|-\rad_{min}^{(k,l)}}{\rad_{max}^{(k,l)}-\rad_{min}^{(k,l)}}\Big)\Big(\frac{\rad_{max}^{(k)}-\rad_{min}^{(k)}}{\rad_{max}^{(k,l)}-\rad_{min}^{(k,l)}}\Big)$which represents the distance between centers of the tubes of two neighboring agents, adjusted by the sum of their minimum allowable tube radii and weighted by the social interaction function $\phi^{(k,l)}(\soc,t)$.

The radius $\rad^{(k)}(t)$ of the tube changes in two cases. First, when the center of the tube is close to any obstacle, the radius shrinks to avoid collision with the unsafe set, and expands when it is farther away. Second, the radius adapts according to the social indices of agent $k$ $\soc^{(k)}$, and its neighbour $l$ $\soc^{(l)}$, whenever the agents are in close proximity.

Thus, given a time-varying center $\cen^{(k)}: \R_0^+ \rightarrow \R^n$ and radius $\rad^{(k)}: \R_0^+ \rightarrow \R^+$ for each agent $k \in \Ag $, governed by the dynamics in Equations \eqref{eqn:cen_dynamic} and \eqref{eqn:radius_dynamics}, we define the STT $\Gamma^{(k)}(t) = \mathcal{B} (\cen^{(k)}(t), \rad^{(k)}(t))$ as a closed ball in $\R^n$ centerd at $\cen^{(k)}(t)$ with radius $\rad^{(k)}(t)$:
\begin{equation}\label{eqn:stt_ball}
    \Gamma^{(k)}(t)\! :=\! \{ x\!\in \!\R^n \!\mid \! \|x\!-\!\cen^{(k)}(t)\| \!\leq\! \rad^{(k)}(t)\}, \forall t \!\in \!\R_0^+ .
\end{equation}

\subsection{Social Interaction Function (SIF)}\label{sec:sif}
For a given SA-MAS ($\Sigma_s$) with a social awareness index $\soc^{(k)},\forall k \in \Ag$, the value of $\soc^{(k)}$ determines the interaction between each agent while solving the assigned TRAS task. To incorporate these social behaviors into the STT, we define the Social Interaction Function (SIF) $\phi^{(k,l)}(\soc, t)$ as follows:
\begin{equation}\label{eqn:SIF}
\phi^{(k,l)}(\soc,t)\! =\!
\begin{cases}
\frac{\soc^{(k)}}{\soc^{(l)}+\soc^{(k)}}, & t < t_c^{(k)}, \\
\frac{\soc^{(k)}}{\soc^{(l)}+\soc^{(k)}} \mathsf{e}^{\big(-\frac{(t-t_c^{(k)})^2}{b^2}\big)}, & t \ge t_c^{(k)},
\end{cases}
\end{equation}
where $b \in [0,1]$. A higher social awareness index $\soc^{(k)} \gg \soc^{(l)}$ yields $\phi^{(k,l)}(\soc,t) \approx 1$, representing an \textit{altruistic} agent willing to compromise its task to avoid collisions. In contrast, a lower social awareness index $\soc^{(k)} \ll \soc^{(l)}$ gives $\phi^{(k,l)}(\soc,t) \approx 0$, representing an \textit{egoistic} agent prioritizing its own TRAS task.

When agent $k$ approaches neighboring agent $l$, two things happen to tube $\Gamma^{(k)}(t)$. The tube center $\cen^{(k)}(t)$ steers around the neighboring agent, and the radius $\rad^{(k)}(t)$ shrinks to ensure that the intersection of the two tubes is empty at all times. It is important to note that similar adaptations also occur for tube $\Gamma^{(l)}(t)$, as $\cen^{(l)}(t)$ steers around tube $k$ and $\rad^{(l)}(t)$ shrinks to maintain disjoint tubes. Now, the relative effort for collision avoidance is influenced by the social indices of the two agents $\soc^{(k)}$ and $\soc^{(l)}$, through the weighting functions $\phi^{(k,l)}(\soc,t)$ and $\phi^{(l,k)}(\soc,t)$.
\begin{itemize}
    \item An agent with lower social awareness index $\soc^{(k)}$ yields lower $\phi^{(k,l)}(\soc,t)$, and therefore gives less weight to the third term in \eqref{eqn:cen_dynamic} and $d'^{(k,l)}(t)$ in Equation~\eqref{eqn:rad_agent}. So, its tubes bend and shrink minimally, prioritizing its own task.
    \item An agent with larger social awareness index $\soc^{(k)}$ yields larger $\phi^{(k,l)}(\soc,t)$, and therefore assigns more weight to the third term in \eqref{eqn:cen_dynamic} and $d'^{(k,l)}(t)$ in Equation~\eqref{eqn:rad_agent}. Thus, their tubes bend and shrink more, prioritizing inter-agent collision-avoidance.
    \item If a pair of agents share similar social indices $\soc^{(k)} \approx \soc^{(l)}$, then irrespective of whether it is high or low, the tube for both the agents are considered equally responsible for avoiding inter-agent collisions $\phi^{(k,l)}(\soc,t) \approx \phi^{(l,k)}(\soc,t) \approx 0.5$. This ensures a fair and symmetric treatment of agents with similar social behaviors.
\end{itemize}

Finally, the parameter $b$ in Equation~\eqref{eqn:SIF} controls the rate at which $\phi^{(k,l)}(\soc,t)$ decays to zero after agent $k$ reaches its target at $t_c^{(k)}$. This design choice is made to ensure that, once an agent reaches its target, it prioritizes remaining within the target region thereafter, consistent with the TRAS specification in Definition~\ref{def:satras}. As $\phi^{(k,l)}(\soc,t) \approx 0$ after arrival, agent $k$ behaves more egoistically and no longer actively adapts its motion for neighboring agents. In the current framework, this does not compromise safety since inter-agent collision avoidance is still guaranteed through the non-intersection of the STTs. However, in highly congested goal regions or narrow passages, the post-arrival interaction rule may be relaxed depending on the task objective.

\subsection{Theoretical Guarantee of TRAS Satisfaction}

The next theorem guarantees that the designed STT in Equation~\eqref{eqn:stt_ball} adheres to the key conditions for satisfying TRAS specifications in \eqref{eqn:satras}. 
\begin{thm}\label{theorem_tube}
The STT $\Gamma^{(k)}(t),\forall k \in \Ag $ in \eqref{eqn:stt_ball} meets the following to ensure satisfaction of the TRAS specification:
\begin{enumerate}
    \item[(i)] The tubes for each agent $k \in \Ag $ reach their respective target within the prescribed time $t_c$ and stays within it thereafter: $\Gamma^{(k)}(t) \subseteq \T^{(k)},\forall t\in[t_c^{(k)},\infty)$.
    \item[(ii)] The tubes for each agent $k \in \Ag $ avoid the unsafe set at all times: 
    $\Gamma^{(k)}(t) \cap \U(t) = \emptyset$,  $\forall t \in \R_0^+$.
     \item[(iii)] The tubes of any two distinct agents do not intersect, regardless of their social awareness indices, i.e.,
\begin{align}
    \Gamma^{(k)}(t) \cap\Gamma^{({l})}(t) = \emptyset, 
     \forall t \in \R_0^+,
     \forall \{k,l\}\in \Ag ,k\neq l\notag.
\end{align}
     
    \item[(iv)] The STT radius for each agent $k \in \Ag $ remains  positive throughout the motion: $\rad^{(k)}(t) \in \mathbb{R}^+, $ $\forall t \in \R_0^+$.
\end{enumerate}
\end{thm}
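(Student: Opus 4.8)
The plan is to establish the four claims essentially in the order (iv), (ii), (iii), (i), since positivity of the radius underlies the well-posedness of the switching functions $\alpha_j^{(k)}$ and $\beta^{(k,l)}$ and of the repulsion vectors $m_j^{(k)}, \hat m^{(k,l)}$, which in turn drive the avoidance arguments. For claim (iv), I would analyze the radius ODE \eqref{eqn:radius_dynamics}. The key observation is that $\rad^{(k)}$ is built as a softmin-type combination: rewriting the right-hand side, one recognizes that $\rad^{(k)}(t)$ is (up to the constant floor term $\mathsf{e}^{-\nu\rad^{(k)}_{max}}$) the smooth minimum $-\frac{1}{\nu}\ln\!\big(\mathsf{e}^{-\nu\rad^{(k)}_{max}}+\mathsf{e}^{-\nu d_1^{(k)}}+\mathsf{e}^{-\nu d_2^{(k)}}\big)$ plus a constant of integration fixed by the initial condition; differentiating this expression reproduces \eqref{eqn:radius_dynamics}. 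Hence $\rad^{(k)}(t)$ stays strictly between $0$ and $\rad^{(k)}_{max}$ as long as $d_1^{(k)}, d_2^{(k)}$ remain positive, and one shows $d_1^{(k)}>\rad^{(k)}_{min}>0$ and $d_2^{(k)}\geq \rad^{(k)}_{min}>0$ hold by the same repulsion/separation mechanism used for (ii)–(iii). So the positivity of the radius and the safety claims are proved jointly rather than separately.

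For claims (ii) and (iii), the mechanism is the classical ``blow-up barrier'' argument for the center dynamics. Suppose, for contradiction, that the center $\cen^{(k)}(t)$ of agent $k$ reaches the critical distance $\|\cen^{(k)}(t)-o^{(j)}(t)\| - \rad_o^{(j)}(t) = \rad^{(k)}_{min}$ from obstacle $j$ at some first time. Near that event the switching function $\alpha_j^{(k)}$ is active, the cubically singular vector $m_j^{(k)}$ in \eqref{eqn:m_obs} dominates $\dot{\cen}^{(k)}$, and its component along $\cen^{(k)}-o^{(j)}$ produces an outward drift whose magnitude diverges like $1/(\text{gap})^3$, which — using boundedness of the target-tracking term $\gamma^{(k)}$, of the obstacle velocities $\dot o^{(j)}$ (needed so the obstacle cannot ``catch up'' to the center), and of the finitely many other repulsion terms — forces the gap to stay bounded away from $\rad^{(k)}_{min}$; this is a standard comparison/Gronwall-type estimate on $g_j^{(k)}(t):=\|\cen^{(k)}(t)-o^{(j)}(t)\|-\rad_o^{(j)}(t)$. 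Then, since $\rad^{(k)}(t)\leq d_1^{(k)}(t)\leq \|\cen^{(k)}-o^{(j)}\|-\rad_o^{(j)}$ (from the definition of $d_1^{(k)}$ as a smooth min and the structure of \eqref{eqn:radius_dynamics}), the whole ball $\Gamma^{(k)}(t)=\B(\cen^{(k)}(t),\rad^{(k)}(t))$ stays clear of $\Obs^{(j)}(t)$, giving (ii). Claim (iii) is identical in form with $o^{(j)}$ replaced by the neighbor's center $\cen^{(l)}$, $\rad_o^{(j)}+\rad^{(k)}_{min}$ replaced by $\rad^{(l)}_{min}+\rad^{(k)}_{min}$, the switching function $\beta^{(k,l)}$ and the SIF weight $\phi^{(k,l)}(\soc,t)$ entering; here one must note that when $\soc^{(k)}$ is small ($\phi^{(k,l)}\approx0$) agent $k$'s own repulsion is weak, but the symmetric term in agent $l$'s dynamics (with $\phi^{(l,k)}\approx1$) carries the load, and since $\phi^{(k,l)}+\phi^{(l,k)}\leq1$ with the pair never both vanishing (at least before both completion times), the combined gap $\|\cen^{(k)}-\cen^{(l)}\|$ is still bounded below by $\rad^{(k)}_{min}+\rad^{(l)}_{min}$ — then $\rad^{(k)}(t)+\rad^{(l)}(t)\leq d_2^{(k)}(t)+\ldots$ gives disjointness of the two balls.

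For claim (i), I would use the prescribed-time convergence built into $\gamma^{(k)}$ in \eqref{eqn:goal}: away from active obstacle/agent terms, $\dot{\cen}^{(k)} = h_1^{(k)}\frac{t_c^{(k)}}{t_c^{(k)}-t}(\eta^{(k)}-\cen^{(k)})$ is a linear time-varying system whose solution satisfies $\|\cen^{(k)}(t)-\eta^{(k)}\|\leq \|\cen^{(k)}(0)-\eta^{(k)}\|\big(\frac{t_c^{(k)}-t}{t_c^{(k)}}\big)^{h_1^{(k)}t_c^{(k)}}$ with $h_1^{(k)}t_c^{(k)}>1$, so $\cen^{(k)}(t)\to\eta^{(k)}$ as $t\to t_c^{(k)-}$, and by Assumption~\ref{ass_pmin} the obstacle and inter-agent terms are inactive in a neighborhood of $\eta^{(k)}$ near $t_c^{(k)}$ (since $\eta^{(k)}\in\hat\T^{(k)}$ sits at distance $>\rad^{(k)}_{max}$ from the unsafe set and at distance $>\rad^{(k)}_{max}+\rad^{(l)}_{max}$ from other targets), so the convergence is genuine. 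For $t\geq t_c^{(k)}$, $\gamma^{(k)}\equiv0$ and (with the decaying SIF) the center remains at/near $\eta^{(k)}$; combined with $\rad^{(k)}(t)\leq d_T^{(k)}$ one gets $\Gamma^{(k)}(t)\subseteq\hat\T^{(k)}\subseteq\T^{(k)}$. I expect the main obstacle to be claim (iii): making the ``shared responsibility'' argument rigorous requires carefully tracking both agents' center dynamics simultaneously (so it is not purely a per-agent argument), handling the degenerate regime where one agent is fully egoistic, and ensuring the SSIF-weighted quantity $d'^{(k,l)}$ in \eqref{eqn:rad_agent} stays positive so that the radius bound $\rad^{(k)}+\rad^{(l)}\leq\|\cen^{(k)}-\cen^{(l)}\|$ closes the disjointness argument — the obstacle and target cases (ii) and (i) are comparatively routine instances of the STT machinery from \cite{das2024spatiotemporal}.
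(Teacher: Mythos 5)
Your proposal is correct in outline and follows essentially the same route as the paper's proof: a barrier-type argument on the centre--obstacle and centre--centre distances in which the cubically singular repulsion terms dominate at the safety boundary, the closed-form softmin expression for $\rad^{(k)}(t)$ to bound the radius by $d'^{(k)}_j$ and $d'^{(k,l)}$ (with the SIF weights summing to one closing the inter-agent disjointness), the explicit solution of the attraction ODE for prescribed-time reaching, Assumption~\ref{ass_pmin} for times beyond the completion instants, and radius positivity deduced from the established separations. The only differences are cosmetic: you order the claims (iv)--(ii)--(iii)--(i) and phrase the avoidance steps as a contradiction/comparison estimate on the gap, whereas the paper works with the squared-distance functions $J^{(k)}_j$ and $M^{(k,l)}$ and derives (iv) last, and the subtleties you flag in (iii) (the degenerate egoistic regime and the positivity of $d'^{(k,l)}$) are treated in the paper at the same level of rigor via its three-case analysis.
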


\begin{pf}
We prove each claim individually:

(i) By Assumption~\ref{ass_pmin}, at $t = t_c^{(k)}$, the tube of agent $k$ is sufficiently separated from all unsafe sets, and the centers of all other agents $l \in \mathcal{N}^{(k)}_a(t)$. Therefore, we have $\alpha^{(k)}_j(t_c^{(k)}) = \beta^{(k,l)}(t_c^{(k)}) = 0$ for all $j\in \mathcal{N}^{(k)}_o(t)$ and for all $l \in \Ag \setminus \{k\}$. Substituting this in Equation~\eqref{eqn:cen_dynamic}, we get:
\begin{align*}
    \dot\cen^{(k)}(t)=h_1^{(k)}\frac{t_c^{(k)}}{t_c^{(k)}-t}(\eta^{(k)}-\cen^{(k)}(t)).
\end{align*}
Solving this equation, we obtain $\cen^{(k)}(t)=\eta^{(k)}+C(t_c^{(k)}-t)^{h_1^{(k)}t_c^{(k)}}$, where $C$ is a constant determined using the initial condition $\cen^{(k)}(0)=s^{(k)}$. The solution here approaches $\eta^{(k)}$ as $t$ approach $t_c^{(k)}$, i.e., $\cen^{(k)}(t_c^{(k)})=\eta^{(k)}$, with convergence rate determined by $h_1^{(k)}.$

Next, we can write the closed-form solution for the STT radius dynamics in Equation~\eqref{eqn:radius_dynamics} as follows:
 \begin{align}\label{eqn:rad_sol}
      \rad^{(k)}(t)=\frac{-1}{\nu}\ln(\mathsf{e}^{-v\rad_{max}^{(k)}}+\mathsf{e}^{-vd^{(k)}_1}+\mathsf{e}^{-vd^{(k)}_2}),
  \end{align}
  where $d^{(k)}_1$ and $d^{(k)}_2$ is defined in \eqref{eqn:rho} and \eqref{eqn:rad_agent}. Therefore,
  \begin{align}\label{eqn:rad_min_ineq}
      \rad^{(k)}(t)\leq \min\Big(&\min_{j\in \mathcal{N}^{(k)}_o(t)}(\norm{\cen^{(k)}(t)-o^{(j)}(t)}-\rad_o^{(j)} ),\rad^{(k)}_{max}, \ \min_{l\in \mathcal{N}^{(k)}_a(t)}d'^{(k,l)}\Big),
  \end{align}
  Thus, the radius satisfies $\rad^{(k)}(t) \leq \rad^{(k)}_{max} \leq d_T^{(k)}$ for all $t \in [0,t_c^{(k)}]$. At $t=t_c^{(k)}$, we have $\cen^{(k)}(t_c^{(k)})=\eta^{(k)}$ and hence, 
  \begin{align*}
      \Gamma^{(k)}(t_c^{(k)})&=\B(c^{(k)}(t_c^{(k)}),\rad^{(k)}(t_c^{(k)}))\subseteq\B(\eta^{(k)},d_T^{(k)})= \hat\T^{(k)}\subseteq \T^{(k)}.
  \end{align*}
So far, we have established that $\Gamma^{(k)}(t_c) \subset \T^{(k)}$. This result can be extended to all subsequent times $t\in(t_c^{(k)},\infty)$ by using \eqref{eqn:goal} and Assumption \ref{ass_pmin}. Under these conditions, the first and second terms in \eqref{eqn:cen_dynamic} become zero, while the third term also vanishes due to the definition of the SIF in \eqref{eqn:SIF}, i.e., $\phi^{(k,l)}(\soc,t) = 0$. Consequently, we obtain $\dot{\cen}^{(k)} = 0$ for all $t \in (t_c^{(k)}, \infty)$, which implies that $\cen^{(k)}(t) = \eta^{(k)}$ over the same time interval.

Next, for radius $\rad^{(k)}(t)$ of STT we can extend the argument that $\rad^{(k)}(t) \leq \rad^{(k)}_{max} \leq d_T^{(k)}$ for all $t \in (t_c^{(k)},\infty)$, hence, we conclude that:
  \begin{align*}
      \Gamma^{(k)}(t)&=\B(c^{(k)}(t),\rad^{(k)}(t)) \subseteq\B(\eta^{(k)},d_T^{(k)})= \hat\T^{(k)}\subseteq \T^{(k)},\forall t \in [t_c^{(k)},\infty).
  \end{align*}
(ii)  We prove the second claim for an arbitrary agent $k \in \Ag $, which extends to all agents. For $j \notin \mathcal{N}^{(k)}_o(t)$, $\Gamma(t) \cap \Obs^{(j)}(t) = \emptyset$. Now, for each unsafe set $j \in \mathcal{N}^{(k)}_o(t)$, we define the following time varying continuous function:
\begin{align}
   J^{(k)}_j(t)=(\cen^{(k)}(t)-o^{(j)}(t))^\top(\cen^{(k)}(t)-o^{(j)}(t))
    -(\rad^{(j)}_o(t)+\rad^{(k)}_{min})^2,\nonumber
\end{align}
which measures the squared distance between the $k^{th}$ agent's tube and the center of the $j^{th}$ unsafe set, offset by $\rad_o^{(j)}+\rad_{min}^{(k)}$, which acts as safety margin. The time derivative of $J^{(k)}_j(t)$ can be written as:
\begin{align}
       \dot J^{(k)}_j(t)=2(\cen^{(k)}(t)-o^{(j)}(t))^\top(\dot \cen^{(k)}(t)-\dot o^{(j)}(t))-2(\rad_o^{(j)}+\rad_{min})\dot\rad^{(j)}(t).\notag
\end{align}
We now look at $\dot{J}^{(k)}_j(t)$ on the boundary of the safe margin around each obstacle, $\|\cen^{(k)}(t) - o^{(j)}(t)\| = \rad_o^{(j)}(t) + \rad^{(k)}_{min}$. Substituting $\cen^{(k)}(t)$ into the expression for $\dot J^{(k)}_j(t)$:
\begin{align*}
     &\dot{J}^{(k)}_j(t) =2h_1^{(k)}\frac{t_c^{(k)}}{t_c^{(k)}-t}(\cen^{(k)}(t)-o^{(j)}(t))^\top(\eta^{(k)}-\cen^{(k)}(t)) \notag \\
     &+2h_{2,j}^{(k)}   \alpha^{(k)}_j(t)\frac{\norm{\cen^{(k)}(t) - o^{(j)}(t)}^2}{\left( \| \cen^{(k)}(t) - o^{(j)}(t) \| - (\rad_o^{(j)}(t)+\rad^{(k)}_{min})\right)^3} \notag \\
     &+ 2h^{(k)}_{3,j}   \alpha^{(k)}_j(t)(\cen^{(k)}(t)-o^{(j)}(t))^\top m^{(k)}_j(t) \ \\
     &+\sum_{l \in \mathcal{N}^{(k)}_a(t)}2 \beta^{(k,l)}(t)\Big (\hat h_{2}^{(k,l)}\hat m^{(k,l)}(t) + \hat h_{3}^{(k,l)}\hat v^{(k,l)}(t) \Big ) \phi^{(k,l)}(\soc,t)\notag\\
     &- 2(\cen^{(k)}(t)-o^{(j)}(t)) \dot{o}^{(j)}(t)-2(\rad_o^{(j)}+\rad^{(k)}_{min})\dot\rad^{(j)}(t). \notag
\end{align*}
where $\alpha^{(k)}_j(t)\in \R^+$  since $\rad^{(k)}_{max}>\rad^{(k)}_{min}$.

As $\norm{ \cen^{(k)}(t) - o^{(j)}(t) } \rightarrow \rad_o^{(j)}(t) + \rad^{(k)}_{min}$, the denominator in the second term approaches zero, making this term dominant and positive. As a result, $\dot{J}^{(k)}_j(t) > 0$ near the boundary. The STT center is initially at a safe distance from the $j$-th unsafe set, i.e., $\norm{ \cen^{(k)}(0) - o^{(j)}(0) } > \rad_o^{(j)}(0) + \rad^{(k)}_{min}$, which implies $J^{(k)}_j(0) > 0$. Since $\dot{J}^{(k)}_j(t) > 0$ as $\norm{ \cen^{(k)}(t) - o^{(j)}(t) } \rightarrow \rad_o^{(j)}(t) + \rad^{(k)}_{min}$, the function $J^{(k)}_j(t)$ cannot decrease to zero or become negative. Therefore, $J^{(k)}_j(t) > 0$ holds for all $t \in [0, t_c^{(k)}]$, implying 
$$\| \cen^{(k)}(t) - o^{(j)}(t) \| > \rad_o^{(j)}(t) + \rad^{(k)}_{min} \ \forall \ t \in [0,t^{(k)}_c].$$
Using the Assumption \ref{ass_pmin}, we can further extend the above results for all subsequent times. Hence, the STT center remains at least $\rad_o^{(j)}(t)+\rad_{min}$ away from the center of the $j$-th obstacle at all times. 
Now, to guarantee that the tube $\Gamma^{(k)}(t)=\mathcal{B}(\cen^{(k)}(t),\rad^{(k)}(t))$ does not intersect with the unsafe set, it suffices to show that:
\begin{align}\label{eqn:radprove}
      \rad^{(k)}(t)\leq d'^{(k)}_j,\forall j \in \mathcal{N}^{(k)}_o(t),
\end{align} 
where $d'^{(k)}_j$ is defined in \eqref{eqn:rho}. We verify this using the solution of the radius dynamics given in Equation~\eqref{eqn:radius_dynamics}. Consider the following two cases:
\\

\textbf{Case 1:} $\forall j \in \mathcal{N}^{(k)}_o(t),\min\Big(\rad^{(k)}_{max},\min_{l\in \mathcal{N}^{(k)}_a(t)}(d'^{(k,l)})\Big) \leq d'^{(k)}_j$:
 Substituting the inequality  in \eqref{eqn:rad_min_ineq} we have:
\begin{align*}
     \rad^{(k)}(t)\leq \min\Big(\rad^{(k)}_{max},\min_{l\in \mathcal{N}^{(k)}_a(t)}(d'^{(k,l)})\Big) \leq d'^{(k)}_j, 
\end{align*}
for all $j \in \mathcal{N}^{(k)}_o(t)$, satisfying condition~\eqref{eqn:radprove}.

\textbf{Case 2:} $\exists \hat j \in \mathcal{N}^{(k)}_o(t),\min\Big(\rad^{(k)}_{max},\min_{l\in \mathcal{N}^{(k)}_a(t)}(d'^{(k,l)})\Big) > d'^{(k)}_{\hat j}$:
 Substituting the inequality in \eqref{eqn:rad_min_ineq} we directly get
     $\rad^{(k)}(t)\leq d'^{(k)}_{\hat j}$
ensuring condition~\eqref{eqn:radprove} holds.

Thus, in both scenarios, \eqref{eqn:radprove} is satisfied. Now, repeating this argument for all $j \in [1 ;n_o]$ shows that the STT center for the $k^{th}$ agent $\Gamma^{(k)}(t)$ does not intersect any unsafe set at any time. The same reasoning applies to all other agents:
$$\Gamma^{(k)}(t) \cap \U(t) = \emptyset, \forall t \in \R_0^+, \forall k \in \Ag .$$ 

It is important to note that the time derivatives of the obstacle position and size are used only in the analysis and are not required for implementation, which relies only on the current obstacle position obtained online.

(iii)  For any two distinct pair of agents $k,l\in \Ag $, if $l \notin \mathcal{N}^{(k)}_a(t)$, then $\Gamma^{(k)}(t) \cap \Gamma^{(l)}(t) = \emptyset.$
Now for $l \in \mathcal{N}^{(k)}_a(t)$ we define a time varying functions:
\begin{align*}
    M^{(k,l)}=(\cen^{(k)}(t)-\cen^{(l)}(t))^\top(\cen^{(k)}(t)-\cen^{(l)}(t))-{\rad_{min}^{(k,l)}}^2,
\end{align*}
which measures the squared distance between the tube center of agents $k$ and $l$, offset by the minimum safety distance $\rad_{min}^{(k,l)} := \rad_{min}^{(k)} + \rad_{min}^{(l)}$. Its time derivative is
\begin{align}
    \dot M^{(k,l)}=2(\cen^{(k)}(t)-\cen^{(l)}(t))^\top(\dot\cen^{(k)}(t)-\dot\cen^{(l)}(t)).
\end{align}
We analyze $\dot M^{(k,l)}$ on the boundary of the safe margin, i.e, when $\norm{\cen^{(k)}(t)-\cen^{(l)}(t)}=(\rad_{min}^{(k)}+\rad_{min}^{(l)})$.
Substituting the dynamics of $\dot\cen^{(k)}(t)$ and $\dot\cen^{(l)}(t)$ from \eqref{eqn:cen_dynamic}, we get:
\begin{align}
    &\dot M^{(k,l)}=2(\cen^{(k)}(t)-\cen^{(l)}(t))^\top\theta^{(k,l)}+2\beta^{(k,l)}\notag\\ &\phi^{(k,l)}(\soc,t) \hat h_{2}^{(k,l)}\frac{\norm{\cen^{(k)}(t) - \cen^{(l)}(t)}^2}{\left( \| \cen^{(k)}(t) - \cen^{(l)}(t) \| - (\rad_{min}^{(l)}+\rad^{(k)}_{min})\right)^3}\notag\\
    &+2\beta^{(k,l)}\phi^{(k,l)}(\soc,t)(\cen^{(k)}-\cen^{(l)})^\top \hat h_{3}^{(k,l)}\hat v^{(k,l)}(t)-2(\cen^{(k)}(t)-\cen^{(l)}(t))^\top \theta^{(l,k)}+2 \beta^{(l,k)}\phi^{(l,k)}(\soc,t) \notag\\
    &\hat h_{2}^{(l,k)}\frac{\norm{\cen^{(k)}(t) - \cen^{(l)}(t)}^2}{\left( \| \cen^{(k)}(t) - \cen^{(l)}(t) \| - (\rad_{min}^{(k)}+\rad^{(l)}_{min})\right)^3}-2\beta^{(l,k)}\phi^{(l,k)}(\soc,t)(\cen^{(k)}(t)-\cen^{(l)}(t))^\top \hat h_{3}^{(l,k)}\hat v^{(l,k)}(t),\notag
\end{align}
where $\theta^{(k,l)}=\sum_{j\in \mathcal{N}^{(k)}_o(t)}\Big (h_{2,j}^{(k)}m_j^{(k)}(t) + h_{3,j}^{(k)}v_j^{(k)}(t) \Big )  \alpha_j^{(k)}(t)+h_1^{(k)}\frac{t_c^{(k)}}{t_c^{(k)}-t}(\eta^{(k)}-\cen^{(k)}(t)$ and $\beta^{(k,l)}=\beta^{(l,k)}\in \R^+$ by its definition in \eqref{eqn:switch_agent}. 

Next, without loss of generality, we assume that for any unique pair of agents $k$ and $l \in \mathcal{N}^{(k)}_a(t)$, 
$t_c^{(k)}\leq t_c^{(l)}$. We now divide the analysis into three cases: 

\textbf{Case 1:} When $t\in [0,t_c^{(k)}]$:
At the boundary, when $\|\cen^{(k)}(t)-\cen^{(l)}(t)\| \rightarrow \rad_{min}^{(k)}+\rad_{min}^{(l)}$, all terms remain bounded except the second and fourth. Since all constants are positive and $\phi^{(k,l)} \in \mathbb{R}^+$ by definition~\eqref{eqn:SIF}, both the second and fourth terms are positive, and their denominators approach zero, making them dominant. Consequently, $\dot{M}^{(k,l)} > 0$ holds for all $t \in [0, t_c^{(k)}]$.

\textbf{Case 2:} When $t\in (t_c^{(k)},t_c^{(l)}]:$
By definition of the SIF function $\phi^{(k,l)}(\soc,t)=0, \forall  t\in (t_c^{(k)},t_c^{(l)}]$. Thus, at the boundary, when $\|\cen^{(k)}(t)-\cen^{(l)}(t)\| \rightarrow \rad_{min}^{(k)}+\rad_{min}^{(l)}$, similar to Case 1 all terms except the fourth remain bounded including the second term which will be equal to zero. Since all constants are positive and $\phi^{(l,k)} \in \mathbb{R}^+$ by definition~\eqref{eqn:SIF}, the fourth term is positive and its denominators approach zero, making it dominant. Consequently, $\dot{M}^{(k,l)} > 0$ holds for all $t \in(t_c^{(k)},t_c^{(l)}]$.

\textbf{Case 3:} When $t\in (t_c^{(l)},\infty)$:
Using the proof of the first statement presented in the theorem, each tube reaches its target set within the prescribed time. With Assumption~\ref{ass_pmin}, it guarantees a minimum safe separation between the targets of all agents $\norm{\cen^{(k)}(t)-\cen^{(l)}(t)}>\rad_{min}^{(k)}+\rad_{min}^{(l)} \implies M^{(k,l)}(t)>0, \forall {t\in(t_c^{(l)},\infty)}$.

Thus, if the tube centers of agents $k$ and $l$ are initially at a safe distance, i.e., $\norm{\cen^{(k)}(0)-\cen^{(l)}(0)}>\rad_{min}^{(k)}+\rad_{min}^{(l)}$, then $M^{(k,l)}>0$. 
Since for $t \in [0,t_c^{(l)}]$, $\dot M^{(k,l)}(t) > 0$, the function $M^{(k,l)}(t)$ cannot decrease to zero or become negative. Hence, $M^{(k,l)}(t)>0$ holds for all $t\in \R_0^+$ and for all social awareness index $\soc^{(k)},\soc^{(l)}\in (0,1)$, i.e.,
\begin{align}
    \norm{\cen^{(k)}(t)-\cen^{(l)}(t)}>\rad_{min}^{(k)}+\rad_{min}^{(l)}, &\forall {t\in\R_0^+}.
\end{align}
Hence, the centers of the tubes of the agents $k$ and $l$ maintain a minimum safety distance. 

Now in order to prove that tubes for agents $k$ and $l$ do not intersect with each other, i.e, $\Gamma^{(k)}(t)\cap\Gamma^{(l)}(t)=\emptyset$ it suffices to show that:
 \begin{align}\label{eqn:rad_cond}
     \rad^{(k)}, \rad^{(l)}\leq d'^{(k,l)}(t),
 \end{align}
 where $d'^{(k,l)}(t)$ is defined in \eqref{eqn:rad_agent}. Now, we consider the two cases for agent $k$ and extend the same reasoning to agent $l$:

  \textbf{Case 1:} $\Big(\rad_{max}^{(k)},\min_{j \in \mathcal{N}^{(k)}_o(t)}\big( d'^{(k)}_j(t)\big)\Big)\leq d'^{(k,l)}(t)$ $\forall l \in \mathcal{N}^{(k)}_a(t)$.
 Substituting the inequality into \eqref{eqn:rad_min_ineq}, we obtain:
 \begin{align*}
     \rad^{(k)}(t)\!\leq\! \min\Big(\min_{j\in \mathcal{N}^{(k)}_o(t)}(d'^{(k)}_j(t) ),\rad^{(k)}_{max}\Big) \!\leq\! d'^{(k,l)}(t).
 \end{align*}

 \textbf{Case 2:}
  $\min\Big(\rad_{max}^{(k)}, \min_{j \in \mathcal{N}^{(k)}_o(t)}\big(d'^{(k)}_j(t)  \big)\Big)>d'^{(k,\hat l)}(t)$ for any $\hat l \in \mathcal{N}^{(k)}_a(t)$. Substituting the inequality into \eqref{eqn:rad_min_ineq} we directly get $\rad^{(k)}(t)\leq d'^{(k,\hat l)}(t)$.
  
 Thus, in both cases, we obtain $\rad^{(k)} \leq d'^{(k,l)}(t)$. Now, applying the same arguments to agent $l$, we show that condition \eqref{eqn:rad_cond} is satisfied.
 
By repeating the same reasoning for any pair of agents $k,l \in \Ag $, we conclude that tubes of all distinct agent pairs do not intersect, regardless of their social awareness indices:
 \begin{align*}
     \Gamma^{(k)}(t)\cap\Gamma^{(l)}(t)=\emptyset, \ &\forall {t\in\R_0^+}, \forall \soc^{(k)},\soc^{(l)}\in (0,1).
 \end{align*}
 (iv) From parts (ii) and (iii) of the proof for each of the agents $k \in \Ag $, we have established the following conditions:
 \begin{align*}
  &   \norm{\cen^{(k)}(t)-o^{(j)}(t)}>\rad_o^{(j)}+\rad^{(k)}_{min},\forall j \in [1;n_o],\\
  &   \norm{\cen^{(k)}(t)-\cen^{(l)}(t)}>\rad^{(k)}_{min}+\rad_{min}^{(l)},\forall l \in \Ag \setminus \{k\},
 \end{align*}
 implying that $d^{(k)}_1\geq \rad_{min}^{(k)}$ and $d_2^{(k)}>\rad_{min}^{(k)},\forall t\in \R_0^+$. 
 Substituting these inequalities into the radius expression in Equation~\eqref{eqn:rad_sol}, we obtain for all time $t$:
 \begin{align}
     \rad^{(k)}(t)>\frac{-1}{\nu}\ln(\mathsf{e}^{-\nu \rad_{max}^{(k)}}+2\mathsf{e}^{-\nu \rad_{min}^{(k)}})>0.
 \end{align}
 Hence, the tube radius for all agents remains strictly positive at all times. This concludes the proof that the proposed $\Gamma^{(k)}(t)$ in Equation~\eqref{eqn:stt_ball} satisfies the TRAS specification in \eqref{eqn:satras}.
\end{pf}


  

\begin{table*}[t]
\caption{{STT Parameter Tuning Guidelines}}
\label{tab:tuning_guidelines}
\centering
\renewcommand{\arraystretch}{1.15}
{
\begin{tabular}{p{0.1\textwidth} p{0.85\textwidth}}
\hline
\textbf{Parameter} & \textbf{Guideline} \\
\hline
$h_1^{(k)}$ & Regulates convergence of the tube center to the goal. Larger values lead to more aggressive target reaching. Choose $h_1^{(k)} > 1/t_c^{(k)}$. \\
$h_{2,j}^{(k)},\, h_{3,j}^{(k)}$ & Regulate obstacle avoidance. Larger values lead to stronger deviation away from nearby obstacles. \\
$\hat h_2^{(k,l)},\, \hat h_3^{(k,l)}$ & Regulate inter-agent avoidance. Larger values lead to stronger avoidance between neighboring agents. \\
$b$ & Controls the rate at which the SIF decays to zero after $t = t_c^{(k)}$. Smaller values result in faster decay. \\
$\nu$ & Controls the smooth minimum approximation. Larger values bring it closer to the true minimum. \\
$\rad^{(k)}_{\max}$ & Defines the maximum tube radius and sensing range. Choose it based on how early avoidance should become active, while ensuring $\rad^{(k)}_{\max} \leq \min(d_S^{(k)}, d_T^{(k)})$. \\
$\rad^{(k)}_{\min}$ & Defines the minimum allowable tube radius, with $\leq \rad^{(k)}_{\min} \leq \rad^{(k)}_{\max}$. \\
$\soc^{(k)}$ & Defines the social awareness level of each agent. It can be chosen based on task priority or the desired social behavior (e.g., egoistic or altruistic). \\
\hline
\end{tabular}
}
\end{table*}

\begin{remark}
    {Although the theoretical analysis allows any positive values for the gains used in the center dynamics \eqref{eqn:cen_dynamic} and radius dynamics \eqref{eqn:radius_dynamics}, their practical selection depends on factors such as the available actuation limits, and the desired responsiveness of the STT. Therefore, Table~\ref{tab:tuning_guidelines} provides practical guidelines for tuning these gains, as well as for selecting the minimum and maximum STT radii.}
\end{remark}
\begin{lemma}\label{lemma_c}
    For each agent $k \in \Ag $ the tube center $\cen^{(k)}(t)$, the tube radius $\rad^{(k)}(t)$, and their time derivatives $\dot{\cen}^{(k)}(t), \dot{\rad}^{(k)}(t)$ are continuous and bounded for all time $t$.
\end{lemma}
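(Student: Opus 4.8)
The plan is to establish the claim first for the tube centre $\cen^{(k)}$, showing it is $C^1$ on $\R_0^+$, and then to read off the corresponding properties of the radius $\rad^{(k)}$ from the closed form \eqref{eqn:rad_sol}. On each of the open intervals $[0,t_c^{(k)})$ and $(t_c^{(k)},\infty)$ the right-hand side of \eqref{eqn:cen_dynamic} is a composition of continuous maps evaluated along the trajectory: the goal term \eqref{eqn:goal}; the switching functions $\alpha_j^{(k)}$ and $\beta^{(k,l)}$, whose two branches agree at the activation thresholds $\norm{\cen^{(k)}-o^{(j)}}-\rad_o^{(j)}=\rad^{(k)}_{max}$ and $\norm{\cen^{(k)}-\cen^{(l)}}=\rad_{max}^{(k,l)}$ and which are therefore continuous in $\cen^{(k)}$; the repulsion directions $m_j^{(k)}, v_j^{(k)}, \hat m^{(k,l)}, \hat v^{(k,l)}$, which stay finite because Theorem~\ref{theorem_tube}(ii)--(iii) keeps $\norm{\cen^{(k)}-o^{(j)}}>\rad_o^{(j)}+\rad^{(k)}_{min}$ and $\norm{\cen^{(k)}-\cen^{(l)}}>\rad^{(k)}_{min}+\rad_{min}^{(l)}$, so the cubic denominators never vanish; and the SIF $\phi^{(k,l)}$, which is continuous. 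Since $\cen^{(k)}$ is absolutely continuous as a solution of \eqref{eqn:cen_dynamic}, this right-hand side is a continuous function of $t$ on each interval, so $\cen^{(k)}\in C^1$ there and $\dot\cen^{(k)}$ is continuous there.

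It then remains to patch things together at $t=t_c^{(k)}$. By Theorem~\ref{theorem_tube}(i) we already know $\cen^{(k)}(t)\to\eta^{(k)}$ as $t\uparrow t_c^{(k)}$ and $\cen^{(k)}\equiv\eta^{(k)}$ for $t\ge t_c^{(k)}$, so $\cen^{(k)}$ is continuous at $t_c^{(k)}$. For the derivative, Assumption~\ref{ass_pmin} gives a strict separation exceeding $\rad^{(k)}_{max}$ of $\cen^{(k)}(t_c^{(k)})=\eta^{(k)}$ from every obstacle and from every other centre; by the continuity just established — applied simultaneously to all agents — this persists on a two-sided neighbourhood of $t_c^{(k)}$, so there $\alpha_j^{(k)}=\beta^{(k,l)}=0$ and only the goal term is left. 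Using the local closed form $\cen^{(k)}(t)=\eta^{(k)}+C(t_c^{(k)}-t)^{h_1^{(k)}t_c^{(k)}}$ from the proof of Theorem~\ref{theorem_tube}(i) together with $h_1^{(k)}t_c^{(k)}>1$, the goal term equals $-C\,h_1^{(k)}t_c^{(k)}(t_c^{(k)}-t)^{h_1^{(k)}t_c^{(k)}-1}$, which tends to $0$ as $t\uparrow t_c^{(k)}$ and agrees with $\dot\cen^{(k)}(t_c^{(k)})=0$. Hence $\dot\cen^{(k)}$ is continuous on all of $\R_0^+$. Boundedness of $\cen^{(k)}$ and $\dot\cen^{(k)}$ is then immediate: both are continuous on the compact interval $[0,t_c^{(k)}]$, hence bounded there, and constant ($\eta^{(k)}$ and $0$, respectively) for $t\ge t_c^{(k)}$.

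For the radius I would work from \eqref{eqn:rad_sol}, $\rad^{(k)}(t)=-\tfrac1\nu\ln\!\big(\mathsf{e}^{-\nu\rad^{(k)}_{max}}+\mathsf{e}^{-\nu d_1^{(k)}(t)}+\mathsf{e}^{-\nu d_2^{(k)}(t)}\big)$, with $d_1^{(k)}, d_2^{(k)}$ the log-sum-exp soft minima \eqref{eqn:rho},\eqref{eqn:rad_agent} of $d'^{(k)}_j=\norm{\cen^{(k)}-o^{(j)}}-\rad_o^{(j)}$ and of $d'^{(k,l)}=\rad^{(k)}_{min}+(\norm{\cen^{(k)}-\cen^{(l)}}-\rad^{(k)}_{min}-\rad_{min}^{(l)})(1-\phi^{(k,l)})$. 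Each of these quantities is $C^1$ in $t$, since the norms occurring in them never vanish (Theorem~\ref{theorem_tube}(ii)--(iii)) and $\cen^{(k)},\cen^{(l)},\phi^{(k,l)}$ are $C^1$ by the previous steps (the obstacle data $o^{(j)},\rad_o^{(j)}$ being assumed $C^1$). The soft minimum and the outer logarithm are smooth wherever their arguments are positive, and the argument of the outer logarithm is $\ge\mathsf{e}^{-\nu\rad^{(k)}_{max}}>0$; hence $\rad^{(k)}\in C^1$ and $\dot\rad^{(k)}$ is continuous. For boundedness, $0<\rad^{(k)}(t)\le\rad^{(k)}_{max}$ by Theorem~\ref{theorem_tube}(iv) and \eqref{eqn:rad_min_ineq}, and in \eqref{eqn:radius_dynamics} the coefficients $\mathsf{e}^{-\nu d_i^{(k)}}/(\mathsf{e}^{-\nu\rad^{(k)}_{max}}+\mathsf{e}^{-\nu d_1^{(k)}}+\mathsf{e}^{-\nu d_2^{(k)}})$ lie in $[0,1]$ while $\dot d_1^{(k)},\dot d_2^{(k)}$ are convex combinations of the time derivatives of the $d'^{(k)}_j$ and $d'^{(k,l)}$, which are bounded because $\dot\cen^{(k)},\dot\cen^{(l)}$ and the obstacle velocities $\dot o^{(j)},\dot\rad_o^{(j)}$ are bounded and $\phi^{(k,l)},\dot\phi^{(k,l)}$ are bounded.

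The hard part is the junction at $t=t_c^{(k)}$: one must verify that the singular factor $t_c^{(k)}/(t_c^{(k)}-t)$ in the goal term is cancelled at precisely the right rate by the vanishing factor $\eta^{(k)}-\cen^{(k)}(t)$ — the hypothesis $h_1^{(k)}t_c^{(k)}>1$ is exactly what forces $\dot\cen^{(k)}(t)\to0$ rather than a nonzero (or infinite) limit — and that the obstacle- and agent-repulsion terms are genuinely switched off on a two-sided neighbourhood of $t_c^{(k)}$, which is where Assumption~\ref{ass_pmin} enters. A secondary point to watch is that $\phi^{(k,l)}$ must be continuous at $t_c^{(k)}$ for $d'^{(k,l)}$, and hence $\rad^{(k)}$, to be continuous there, so this should be checked carefully against \eqref{eqn:SIF}.
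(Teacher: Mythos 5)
Your proposal is correct and follows essentially the same route as the paper's (much terser) proof: the separation guarantees of Theorem~\ref{theorem_tube}(ii)--(iii) keep the cubic denominators in $m_j^{(k)}$ and $\hat m^{(k,l)}$ bounded away from zero, so $\dot\cen^{(k)}$ is a continuous, bounded function of quantities that are themselves continuous and bounded, and the smooth-min structure of \eqref{eqn:rho}--\eqref{eqn:radius_dynamics} gives the same for $\rad^{(k)}$ and $\dot\rad^{(k)}$. Your additional care at the junction $t=t_c^{(k)}$ (showing the singular factor $t_c^{(k)}/(t_c^{(k)}-t)$ is cancelled because $h_1^{(k)}t_c^{(k)}>1$) and your flag that $\phi^{(k,l)}$ in \eqref{eqn:SIF} as written jumps from $\tfrac{\soc^{(k)}}{\soc^{(k)}+\soc^{(l)}}$ to $-\tfrac{\soc^{(k)}}{\soc^{(k)}+\soc^{(l)}}$ at $t_c^{(k)}$ (so continuity of $d'^{(k,l)}$, hence of $\rad^{(k)}$, needs the sign in the second branch repaired or the decay made continuous) address genuine points that the paper's proof silently skips.
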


\begin{proof}
For each agent $k\in \Ag $, as the radius dynamics in \eqref{eqn:radius_dynamics} smoothly approximate the $\min$ function, both $\rad^{(k)}(t)$ and $\dot{\rad}^{(k)}(t)$ remain continuous and bounded at all times. 

Moreover, by the second part of Theorem \ref{theorem_tube} each agent's tube center follows $\norm{\cen^{(k)}(t)-o^{(j)}_p}>\rad^{(j)}_o+\rad^{(k)}_{min},\forall j \in [1;n_o]$ and by the third part of the same theorem we also have for each distinct pairs of agents $\{k,l\}\in \Ag $, $\norm{\cen^{(k)}(t)-\cen^{(l)}(t)}>\rad_{min}^{(k)}+\rad_{min}^{(l)},$. This implies that $\cen^{(k)}(t)$ and $\dot{\cen}^{(k)}(t)$, which smoothly depend on target, unsafe sets, and center of tube of other agents are also continuous and bounded for all time.
\end{proof}

\section{Controller Synthesis}
In this section, we derive a closed-form, approximation-free control law to constrain the system output of each agent within its respective STTs. We show the control law derivation for an arbitrary agent $k\in\Ag $, which can be extended in the same fashion to all other agents. The proposed design is inspired by the prescribed performance control (PPC) framework for strict-feedback systems \cite{PPCfeedback}, which exploits the lower triangular structure of the system dynamics in \eqref{eqn:sysdyn}. In standard PPC, as well as in prior STT-based works, the performance bounds are typically imposed independently along each state dimension. In contrast, the proposed approach enforces a ball-shaped spatiotemporal tube, which introduces a coupled geometric constraint on the output and leads to a different first-stage reference design. In particular, we first construct the intermediate control input $r_2^{(k)}$ to ensure that the system output remains within the STT. Subsequently, following the recursive funnel-based design in \cite{das2024spatiotemporal}, we define intermediate signals $r_{z+1}^{(k)}$ such that each state $x_z$ tracks $r_z$ for all $z\in[2;N]$, with $u^{(k)}=r_{N+1}^{(k)}$ as the final control input for the $k^{th}$ agentThe steps of the control design are as follows:

\textbf{Stage $1$}: Given the STT for agent $k$ $\Gamma^{(k)}(t)$, define the normalized and transformed errors $e^{(k)}_1(x_1,t)$ and $\varepsilon^{(k)}_1(x_1,t)$,
\begin{align}
e^{(k)}_1(x^{(k)}_1,t)& = \frac{\norm{x^{(k)}_1(t) - \cen^{(k)}(t)}}{\rad^{(k)}(t)}\notag\\
\varepsilon_1^{(k)}(x^{(k)}_1,t)& = \ln\left( \frac{1 + e^{(k)}_1(x_1^{(k)},t)}{1 - e^{(k)}_1(x_1^{(k)},t)} \right).\nonumber
\end{align}
The intermediate control input $r_2(x_1,t)$ is then given by 
\begin{align}
    r^{(k)}_2(x_1,t) = -\kappa^{(k)}_1 \varepsilon^{(k)}_1(x^{(k)}_1,t) \left( x^{(k)}_1(t)-\cen^{(k)}(t) \right),  
    \kappa_1^{(k)} \in \R^+.
\end{align}

\textbf{Stage $z$} ($z \in [2;N]$): 
To ensure $x_z$ tracks the reference signal $r_z$ from Stage $z-1$, we define a time-varying bound: $\gamma^{(k)}_{z,i}(t) = (p^{(k)}_{z,i} - q^{(k)}_{z,i})e^{-\mu^{(k)}_{z,i}t} + q^{(k)}_{z,i}$, and enforce,
    $-\gamma^{(k)}_{z,i}(t) \leq (x^{(k)}_{z,i}-r^{(k)}_{z,i}) \leq \gamma^{(k)}_{z,i}(t), \forall (t,i) \in \R_0^+ \times [1;n],$
where, $\mu^{(k)}_{z,i} \in \R_0^+$, and $p^{(k)}_{z,i}> q^{(k)}_{z,i} \in \R^+$ are chosen such that $|x^{(k)}_{z,i}(0) - r^{(k)}_{z,i}(0)| \leq p^{(k)}_{z,i}$.
Now, define the normalized error and transformed errors $\varepsilon_z^{(k)}(x^{(k)}_{z},t)$ and $\xi^{(k)}_z(x^{(k)}_{z},t)$ as
\begin{subequations} \label{eq:stage k}
    \begin{align}
    e^{(k)}_z(x^{(k)}_{z},t) &= [e^{(k)}_{z,1}(x^{(k)}_{z,1},t), \ldots, e^{(k)}_{z,n}(x^{(k)}_{z,n},t)]^\top 
    = (\textsf{diag}(\gamma^{(k)}_{z,1}(t),\ldots,\gamma^{(k)}_{z,n}(t)))^{-1} \left(x^{(k)}_{z} - r^{(k)}_z \right),  \\
    \varepsilon^{(k)}_z(x^{(k)}_{z},t) &= \big[\ln\left(\frac{1+e^{(k)}_{z,1}(x^{(k)}_{z,1},t)}{1-e^{(k)}_{z,1}(x^{(k)}_{z,1},t)}\right), \ldots,\ln\left(\frac{1+e^{(k)}_{z,n}(x^{(k)}_{z,n},t)}{1-e^{(k)}_{z,n}(x^{(k)}_{z,n},t)}\right) \big]^\top, \\
    \xi^{(k)}_z(x^{(k)}_{z},t) &=  4(\textsf{diag}(\gamma^{(k)}_{z,1}(t),\ldots,\gamma^{(k)}_{z,n}(t)))^{-1}\Big(I_n-\textsf{diag}(e_z^{(k)}\circ e_z^{(k)})\Big)^{-1} .
\end{align}
\end{subequations}
The next intermediate control input $r^{(k)}_{z+1}(\overline{x}^{(k)}_{z},t)$ is then:
\begin{equation*}
    r^{(k)}_{z+1}(\overline{x}_{k},t) = - \kappa^{(k)}_z\xi^{(k)}_z(x^{(k)}_{z},t)\varepsilon^{(k)}_z(x^{(k)}_{z},t), \kappa^{(k)}_z \in \R^+.
\end{equation*}
At stage $N$, this intermediate input is the actual control input:
\begin{equation*}
    u^{(k)}(\overline{x}^{(k)}_{N},t) = - \kappa^{(k)}_N\xi^{(k)}_N(x^{(k)}_{N},t)\varepsilon^{(k)}_N(x^{(k)}_{N},t), \kappa^{(k)}_N \in \R^+.
\end{equation*}

We now state the main theorem guaranteeing that this controller enforces the desired TRAS behavior.

\begin{theorem} \label{theorem_ras}
    Consider the SA-MAS in \eqref{eqn:social_mas}, where each system $(\Sigma^{(k)},s^{(k)}_a$ in \eqref{eqn:sysdyn} has an associated social awareness index $\soc^{(k)}\in(0,1)$ satisfying Assumptions \ref{assum:lip} and \ref{assum:pd}, a temporal reach-avoid-stay (TRAS) specification as defined in Definition~\ref{def:satras}, and the corresponding STT of the agent $\Gamma^{(k)}(t)$ as derived in Equation~\eqref{eqn:stt_ball} while considering social awareness indexes.
    
    If the initial output is within the STT at time $t=0$: $y^{(k)}(0) \in \Gamma^{(k)}(0)$, 
    then the closed-form control laws
    \begin{align}\label{eqn:Control_ras}
        r^{(k)}_2(x^{(k)}_1,t) &= -\kappa^{(k)}_1 \varepsilon^{(k)}_1(x^{(k)}_1,t) \left( x^{(k)}_1(t)-\cen^{(k)}(t) \right), \notag \\
        r^{(k)}_{z+1}(\bar x^{(k)}_{z},t) &= - \kappa^{(k)}_z\xi^{(k)}_z(x^{(k)}_{z},t)\varepsilon^{(k)}_z(x^{(k)}_{z},t), z \in [2;N-1] \notag\\
        u^{(k)}(\bar x_{N},t) &= - \kappa^{(k)}_N\xi^{(k)}_N(x^{(k)}_{N},t)\varepsilon^{(k)}_N(x^{(k)}_{N},t),
    \end{align}    
    where $\kappa^{(k)}_1, \kappa^{(k)}_z , \kappa^{(k)}_N  \in \R^+$ are the control gains and the control input scheme in \eqref{eqn:Control_ras} ensure that the system output remains within the STT: 
    $$y^{(k)}(t) = x^{(k)}_1(t) \in \Gamma^{(k)}(t), \forall t \in \R_0^+,$$
    thereby satisfying the desired TRAS specification.
\end{theorem}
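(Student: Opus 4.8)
The goal is to show that, under the proposed recursive control scheme, the output of each agent is trapped inside its own tube, $y^{(k)}(t)=x_1^{(k)}(t)\in\Gamma^{(k)}(t)$ for all $t\ge 0$; once this is established, \eqref{eqn:stt_con} together with Theorem~\ref{theorem_tube} (the tubes reach $\T^{(k)}$ by $t_c^{(k)}$ and stay, avoid $\U(t)$, are pairwise disjoint, and have positive radii) yields the TRAS specification \eqref{eqn:satras}. By Lemma~\ref{lemma_c} the tube signals $\cen^{(k)},\rad^{(k)}$ are bounded with bounded derivatives once the tubes are synthesized, and the controller for agent $k$ uses only $x^{(k)},\cen^{(k)},\rad^{(k)}$ and the fixed funnels $\gamma^{(k)}_{z,i}$; hence the argument is carried out for one fixed agent and then applies verbatim to every $k\in\Ag$. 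The plan is the standard prescribed-performance/backstepping argument in the spirit of \cite{das2024spatiotemporal,PPCfeedback}: (i) establish well-posedness of the closed loop in transformed error coordinates on an open ``admissible'' set; (ii) show stage by stage that the normalized errors never reach the boundary of that set; (iii) conclude the maximal solution is global, so $e_1^{(k)}(t)<1$ for all $t$.

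For step (i), I would rewrite the closed loop in the coordinates $(\varepsilon_1,e_2,\dots,e_N)$ on the open set $\mathcal D:=\{e_1<1\}\times\prod_{z=2}^{N}(-1,1)^n$. On $\mathcal D$ the right-hand side is continuous in $t$ and locally Lipschitz in the state: $f_z^{(k)},g_z^{(k)}$ are locally Lipschitz by Assumption~\ref{assum:lip}; $\rad^{(k)}(t)$ is bounded below by a strictly positive constant and $\cen^{(k)},\rad^{(k)},\dot\cen^{(k)},\dot\rad^{(k)}$ are bounded by Lemma~\ref{lemma_c}; and each $\gamma_{z,i}^{(k)}(t)$ lies in $[q_{z,i}^{(k)},p_{z,i}^{(k)}]$ with bounded derivative. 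The hypothesis $y^{(k)}(0)\in\Gamma^{(k)}(0)$ (taken in the interior) and the choice of $p_{z,i}^{(k)}$ place the initial condition inside $\mathcal D$, so there is a unique maximal solution on some interval $[0,\tau_{\max})$; by the standard continuation theorem, if $\tau_{\max}<\infty$ the solution leaves every compact subset of $\mathcal D$ as $t\to\tau_{\max}$.

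Step (ii) is the crux. I would prove by induction on $z=1,\dots,N$ that on $[0,\tau_{\max})$ the error $e_z$ stays in a compact subset of its admissible region (equivalently $\varepsilon_z$ is bounded), hence $r_{z+1}^{(k)}$ and $\dot r_{z+1}^{(k)}$ are bounded. For stage $1$, differentiating $V_1=\tfrac12\varepsilon_1^2$ along the closed loop with $\dot x_1=f_1+g_1 r_2+g_1(x_2-r_2)+w_1$, the control contributes a term bounded above by $-\kappa_1\underline{g_1}\,\varepsilon_1 e_1$ (using Assumption~\ref{assum:pd}, $(x_1-\cen)^\top g_1(x_1-\cen)\ge\underline{g_1}\|x_1-\cen\|^2$), which tends to $-\infty$ as $e_1\to1$ (there $\varepsilon_1\to\infty$ while $\|x_1-\cen\|\to\rad>0$), whereas every remaining term ($f_1$, $g_1(x_2-r_2)$ with $|x_{2,i}-r_{2,i}|<\gamma_{2,i}\le p_{2,i}$ on $[0,\tau_{\max})$, $w_1\in\W$, $\dot\cen$, $\dot\rad$) is bounded; so $\dot V_1<0$ near the boundary, giving $e_1(t)\le\bar e_1<1$ and then $r_2,\dot r_2$ bounded. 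For the step from $z-1$ to $z$, with $V_z=\tfrac12\|\varepsilon_z\|^2$ the dominant term is $-\kappa_z\,\varepsilon_z^\top M_z\varepsilon_z$, where $M_z=\xi_z\,\textsf{diag}(\gamma_z)^{-1}g_{z,s}\,\xi_z\succ 0$ (the diagonal factors $\xi_z$ and $\textsf{diag}(\gamma_z)^{-1}$ commute, so only the symmetric part $g_{z,s}$ survives in the quadratic form, and it is positive definite by Assumption~\ref{assum:pd}); this term blows up in the $i$-th component as $e_{z,i}\to1$, dominating the bounded terms ($f_z$, $g_z(x_{z+1}-r_{z+1})$ — or $g_N u$ at $z=N$, handled identically — $w_z$, $\dot r_z$ bounded by the inductive hypothesis, $\dot\gamma_z$ bounded). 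Hence $\|e_z\|_\infty\le\bar e_z<1$ and, for $z<N$, $r_{z+1},\dot r_{z+1}$ bounded.

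Finally, steps (i)–(ii) imply that on $[0,\tau_{\max})$ the whole error trajectory stays in a fixed compact subset of $\mathcal D$ (and $x_1^{(k)}$ stays bounded since $\cen^{(k)},\rad^{(k)}$ are), so by the continuation theorem $\tau_{\max}=\infty$; in particular $e_1^{(k)}(t)<1$, i.e. $\|x_1^{(k)}(t)-\cen^{(k)}(t)\|<\rad^{(k)}(t)$, whence $y^{(k)}(t)\in\Gamma^{(k)}(t)$ for all $t\ge0$ and all $k\in\Ag$, and the TRAS conclusion follows from Theorem~\ref{theorem_tube}. I expect the main obstacle to be step (ii): propagating boundedness cleanly through the cascade — confirming that once all lower stages are confined the references $r_z^{(k)}$ and their time derivatives are genuinely bounded over all of $[0,\tau_{\max})$ (not just on compact time intervals), and verifying that the blow-up rate of $\varepsilon_z,\xi_z$ at the funnel boundary, together with the sign-definiteness furnished by Assumption~\ref{assum:pd}, strictly dominates the accumulated bounded perturbations coming from the unknown $f_z^{(k)}$, the disturbance $w_z^{(k)}$, and the (now time-varying, agent-coupled) tube motion.
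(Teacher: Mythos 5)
Your proposal follows essentially the same route as the paper's proof: the standard approximation-free/prescribed-performance argument of establishing a maximal solution on the open admissible error set, using a Lyapunov function in the transformed error $\varepsilon_z$ together with Assumption~\ref{assum:pd} (via the Rayleigh--Ritz bound on the symmetric part of $g_z$) to confine each normalized error to a compact subset, and then invoking the continuation theorem to get $\tau_{\max}=\infty$, so the output never leaves $\Gamma^{(k)}(t)$ and TRAS follows from Theorem~\ref{theorem_tube}. The only difference is presentational: you carry out the stage-$z$ induction explicitly, whereas the paper proves Stage 1 in detail and defers stages $2$ through $N$ to Theorem 4.1 of \cite{das2024spatiotemporal}.
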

\begin{proof}
We will prove the correctness of the control law for any agent $k\in \Ag$ and the same reasoning can be extended to other agent. As in \cite{das2024spatiotemporal} we also proof the correctness of control law for Stage 1, and for Stages 2 through N. To simplify notation, we will omit the superscript $(k)$ and treat $\cen^{(k)}(t)$ as $\cen(t)$, $\rad^{(k)}(t)$ as $\rad^{(k)}(t)$ and we also drop the superscript $k$ from the system dynamics \eqref{eqn:sysdyn_ind}. Since the analysis in this section applies uniformly to all agents, this change will not impact readability.

\textbf{Stage $1$:}  
Differentiating $e_1(x_1,t)$ with respect to time $t$ and substituting the system dynamics from \eqref{eqn:sysdyn}, we obtain:
\begin{align}
    \dot{e}_1(x_1,t) &= \Big( \|x_1 -\cen\|^{-1}(x_1 - \cen)^\top(f_1(x_1) + g_1(x_1)x_2 - \dot{\cen}(t)) - \dot{\rad}(t)e_1(x_1,t) \Big) / \rad^{(t)} := h_1(e_1,t).
\end{align}
We define the error constraints for $e_1$ through the open and bounded set $\mathbb{D}:=(0,1)$.
The proof proceeds in three steps. First, we show that a maximal solution exists within $\mathbb{D}$ in the maximal time solution interval $[0, \tau_{\max})$. Next, we prove that the proposed control law \eqref{eqn:Control_ras} ensures $e_1(x_1,t)$ remains in a compact subset of $\mathbb{D}$. Finally, we prove that $\tau_{\max}$ can be extended to $\infty$.

\underline{\textit{Step (i):}}  
Given $\|x_1(0) - \cen(0)\| \leq \rad(0)$, the initial error $e_1(x_1(0),0)$ lies in $\mathbb{D}$. Since $\cen(t)$, $\rad(t)$ are smooth and bounded (Lemma~\ref{lemma_c}), $f_1(x_1)$, $g_1(x_1)$ are locally Lipschitz, and the control law $r_2(x_1,t)$ is smooth in $\mathbb{D}$, the dynamics $h_1(e_1,t)$ is locally Lipschitz in $e_1$ and continuous in $t$. Hence, by \cite[Theorem 54]{sontag}, there exists a maximal solution $e_1 : [0, \tau_{\max}) \rightarrow \mathbb{D}$ such that $e_1(t) \in \mathbb{D}$ for all $t \in [0, \tau_{\max})$.

\underline{\textit{Step (ii):}}  
Consider the Lyapunov candidate $V_1 = 0.5\varepsilon_1^{2}$. Differentiating $V_1$ w.r.t. $t$, and substituting $\dot{\varepsilon}_1$, $\dot{e}_1$, and the system dynamics with the control law \eqref{eqn:Control_ras}, we get:
\begin{align*}
    &\dot{V}_1 = \varepsilon_1 \dot{\varepsilon}_1
    =\frac{2\varepsilon_1}{\rad(1-e_1^2)} \Big( \frac{(x_1-\cen)^\top}{\norm{x_1-\cen}}(\dot{x}_1-\dot{\cen})-\dot{\rad}e_1 \Big) \\
    &=  \frac{2\varepsilon_1}{\rad(1-e_1^2)} \Big( \frac{(x_1-\cen)^\top}{\norm{x_1-\cen}}g_1(x_1)x_2+ {\Phi_1}\Big)\\
    &=\frac{2}{\rad(1-e_1^2)}\Big(\frac{-\varepsilon_1 \kappa}{\norm{x_1-\cen}}(x_1-\cen)^\top g_1(x_1)(x_1-\cen)+\varepsilon_1 \Phi_1 \Big),  
\end{align*}
where $ \Phi_1 := \frac{(x_1 - \cen)^\top}{\|x_1 - \cen\|}(f_1(x_1) + d_1 - \dot{\cen}(t) - \dot{\rad}(t)e_1).$
Using the Rayleigh-Ritz inequality and Assumption~\ref{assum:pd}, we have,
$\underline{g}_1\norm{x_1-\cen}^2 \leq \lambda_{min}(g_1(x_1))\norm{x_1-\cen^2}^2 \leq (x_1-\cen)^\top g_1(x_1)(x_1-\cen)$,
which leads to:
\begin{align*}
\dot{V}_1 &\leq \alpha_1 \left(-\kappa \varepsilon_1^2 \underline{g}_1 \|x_1 - \cen\|^2 + \varepsilon_1 \|\Phi_1\| \right), 
\end{align*}
where $\alpha_1 = \frac{2}{\rad(1 - e_1^2)} > 0$.
From Lemma~\ref{lemma_c}, the functions $\cen(t)$, $\dot{\cen}(t)$, $\rad(t)$, $\dot{\rad}(t)$ are all bounded. Since $x_1(t)$ remains in the tube by Step (i), and $f_1$, $g_1$ are continuous, it follows that $\|\Phi_1\| < \infty$ for all $t \in [0, \tau_{\max})$.
Now, for some $\theta \in (0,1)$, we add and subtract $\kappa \alpha_1 \varepsilon_1^2 \underline{g}_1 \theta \|x_1 - c\|^2$:
\begin{align*}
       \dot{V}_1& \leq \alpha_1 \Big (-\kappa_1 \varepsilon_1^2 \underline{g}_1 (1-\theta)\norm{x_1-\cen}^2 -(\kappa_1 \underline{g}_1 \varepsilon_1^2 \theta \norm{x_1-\cen}^2- \|\varepsilon_1\| \norm{\Phi_1}) \Big)\nonumber\\
       &\leq -\alpha_1 \varepsilon_1^2 \underline{g}_1 (1-\theta) \norm{x_1-\cen}^2,  \forall \kappa_1 \underline{g}_1 \|\varepsilon_1\| \theta \norm{x_1-\cen}^2-\norm{\Phi_1} \geq 0\nonumber \\
       &\leq  -\alpha_1 \varepsilon_1^2 \underline{g}_1 (1-\theta) \norm{x_1-\cen}^2, \forall \norm{\varepsilon_1}\geq \frac{\norm{\Phi_1}}{\kappa_1 \underline{g}_1 \theta \norm{x_1-\cen}^2} := \varepsilon_1^*, \ \forall t\in [0.\tau_{max}).
\end{align*}

Thus, we can conclude that there exists a time-independent upper bound $\varepsilon_1^* \in \mathbb{R}^+_0$ to the transformed error, i.e., $\|\varepsilon_1\| \leq \varepsilon_1^*$, for all $t\in[0,\tau_{max})$. Inverting the transformation, we can express the bounds on $e_1$ as:
\begin{align*}
    0 \leq e_{1} \leq \overline{e}_{1} := \frac{e_{1}^{\varepsilon_{1}^*}-1}{e_{1}^{\varepsilon_{1}^*}+1} < 1.
\end{align*}
Thus, $e_1(t) \in [0, \overline{e}_{1}] =: \mathbb{D}' \subset \mathbb{D}$ for all $t \in [0, \tau_{\max})$.

\underline{\textit{Step (iii):}}  
Since $e_1(t)$ remains in the compact subset $\mathbb{D}' \subset \mathbb{D}$ for all $t \in [0, \tau_{\max})$, the solution cannot escape $\mathbb{D}$ in finite time. By contradiction, if $\tau_{\max} < \infty$, then by \cite[Prop. C.3.6]{sontag}, there must exist $t' < \tau_{\max}$ such that $e_1(t') \notin \mathbb{D}$, which contradicts Step (ii). Hence, the solution exists for all $t \geq 0$, i.e., $\tau_{\max} = \infty$.

\textbf{Stages $z \in [2, N]$:}  
For the remaining stages, we apply the same reasoning as presented in Theorem 4.1 of \cite{das2024spatiotemporal}, and is thus omitted here for brevity.

This completes the proof, showing that the control law in \eqref{eqn:Control_ras} enforces the tube constraint \eqref{eqn:stt_con}, and thereby ensures satisfaction of the T-RAS task for all agent $k\in \Ag$ respecting the \textit{social index value} of each agent.
\end{proof}
\section{Case Studies}
To validate the effectiveness of the proposed real-time multi-agent STT framework, we present two case studies: a 2D omnidirectional mobile robot and a 3D UAV. In addition to simulation results, we also demonstrate real-world applicability through hardware experiments for the mobile robot case. {The code used for the simulations and comparative studies in this paper is publicly available at \href{https://github.com/FocasLab/STT-for-SA-MAS.git}{https://github.com/FocasLab/STT-for-SA-MAS.git}.}
\subsection{2D Omnidirectional Robot}

\subsubsection{Hardware Experiments}
To demonstrate the real-world applicability of the proposed framework, we conducted hardware experiments using two physical omnidirectional robots, with dynamics adopted from \cite{das2024prescribed}, operating in a cluttered dynamic indoor environment (Figure~\ref{fig:hardware}). The experiments are performed in a real arena as shown in Figure~\ref{fig:hardware_setup} equipped with a motion capture system and a projector-based setup that generates random dynamic obstacles in real time. Each robot receives only local information about nearby obstacles and neighboring agents within its sensing range, and the proposed STT synthesis and control law are implemented onboard on a Raspberry Pi 4 in real time.

\begin{figure}
    \centering
    \includegraphics[width=0.3\linewidth]{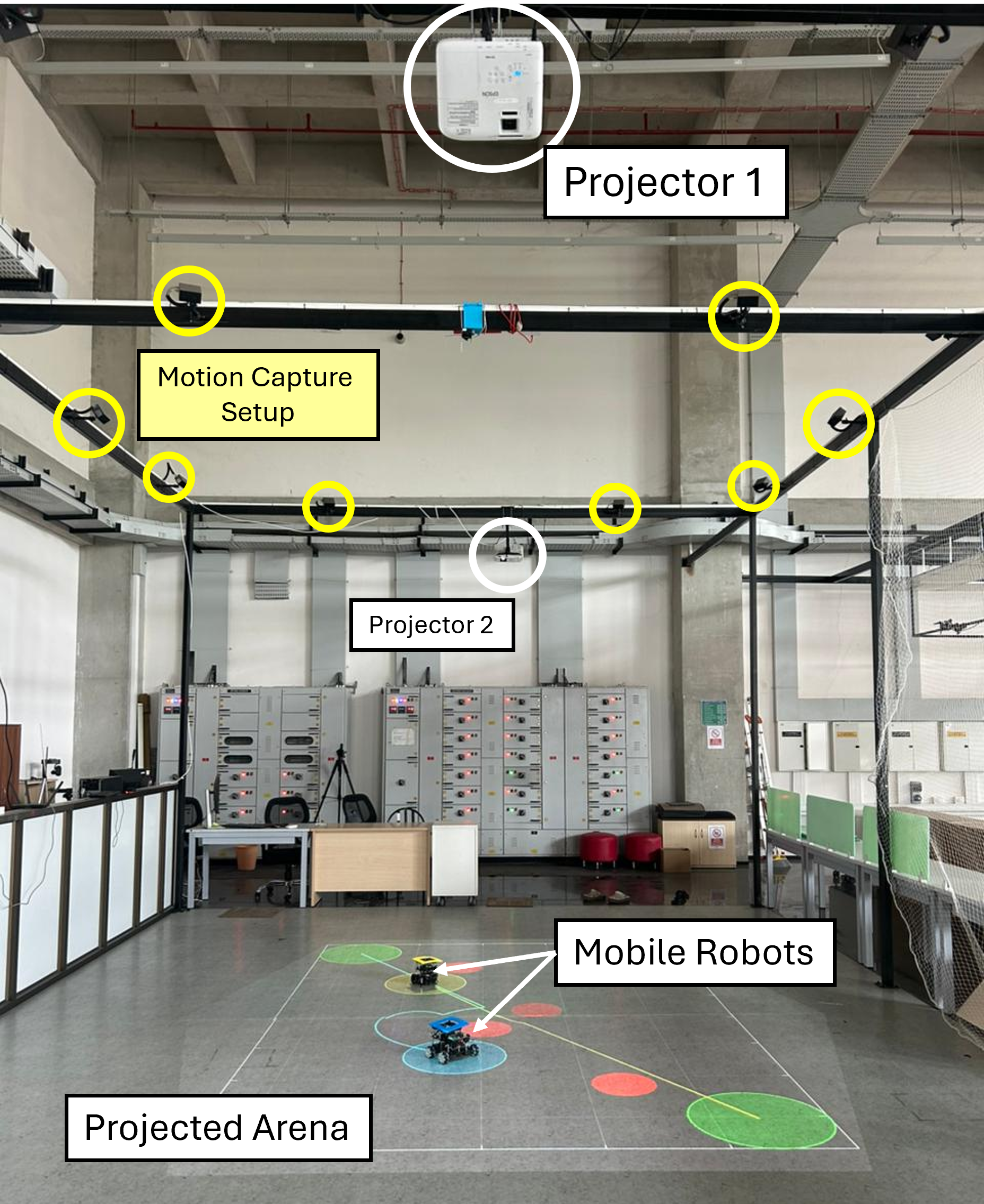}
    \caption{Setup for hardware experiments}
    \label{fig:hardware_setup}
\end{figure}
The task involved position-swapping, where each robot was required to exchange its position with the other within $t_c^{(k)}=120s,\,k\in[1;2]$, while avoiding dynamic obstacles. The tube radii were bounded as $\rad_{min}^{(k)}=0.21 \leq \rad^{(k)} \leq 0.27=\rad_{max}^{(k)}$, and the center dynamics parameters in \eqref{eqn:cen_dynamic} were chosen as $h_1^{(k)}=0.05,\, \hat h_2^{(k,l)}=\hat h_3^{(k,l)}=8,\, k \neq l$, and $h_{2,j}^{(k)}=h_{3,j}^{(k)}=1,\, j \in\mathcal{N}^{(k)}_o(t)$.

To investigate the influence of the social awareness index, we considered two scenarios. In Case~1 (Figure~\ref{fig:subfig1}), the blue agent was assigned a higher social awareness index $\soc^{(\text{blue})}=0.7$ than the yellow agent $\soc^{(\text{yellow})}=0.3$. Consequently, the blue agent behaved more altruistically, showing greater flexibility and cooperation, whereas the yellow agent was more egoistic, prioritizing its goal and obstacle avoidance. In Case~2 (Figure~\ref{fig:subfig2}), the social awareness indices were swapped while keeping all other parameters identical, leading to the opposite behaviors compared to Case~1. A full video of the hardware experiments is available at \href{https://www.youtube.com/watch?v=oDo6Qs9vw7s}{Link}\footnotemark{}.
\subsubsection{Simulation Studies}
To demonstrate the scalability of the proposed framework, we consider a multi-agent system of $n_a = 8$ agents, $\Sigma^{(k)},\, k \in [1;8]$, modeled as omnidirectional mobile robots operating in a 2D environment (Figure~\ref{fig:2d}) with dynamics adapted from~\cite{NAHS}. Each agent is tasked with swapping its position with the diagonally opposite agent. The prescribed time for each agent is different and is defined as $t_c^{(k)} = 5+(k-1)$s, for all $k \in [1;8]$. Agents with lower social awareness index $\soc^{(1)} = \soc^{(5)} = 0.1$ are shown in yellow, while the remaining agents with higher social awareness index $\soc^{(k)} = 0.99,\, k \in [1;8]\setminus\{1,5\}$ are shown in blue.  

The tube radii are bounded by $\rad_{min}^{(k)} = 0.6$ and $\rad_{max}^{(k)} = 0.9$ for all agents. The center dynamics parameters in \eqref{eqn:cen_dynamic} are set as $h_1^{(1)}=0.22,\, h_1^{(2)}=0.15,\, h_1^{(3)}=h_1^{(4)}=0.13,\, h_1^{(k)}=0.12,\, k \in [5,8]$, and $\hat h_2^{(k,l)} = \hat h_3^{(k,l)} = 0.04,\, k \neq l$.  

Figure~\ref{fig:2d} shows the trajectories and STTs of all agents at different time steps. All agents remain safely within their respective tubes and successfully reach their targets. A complete simulation video is available at \href{https://www.youtube.com/watch?v=oDo6Qs9vw7s}{Link}\footnotemark[\value{footnote}].

\begin{figure*}
    \centering
    \begin{subfigure}{\linewidth}
        \centering \includegraphics[width=0.9\linewidth]{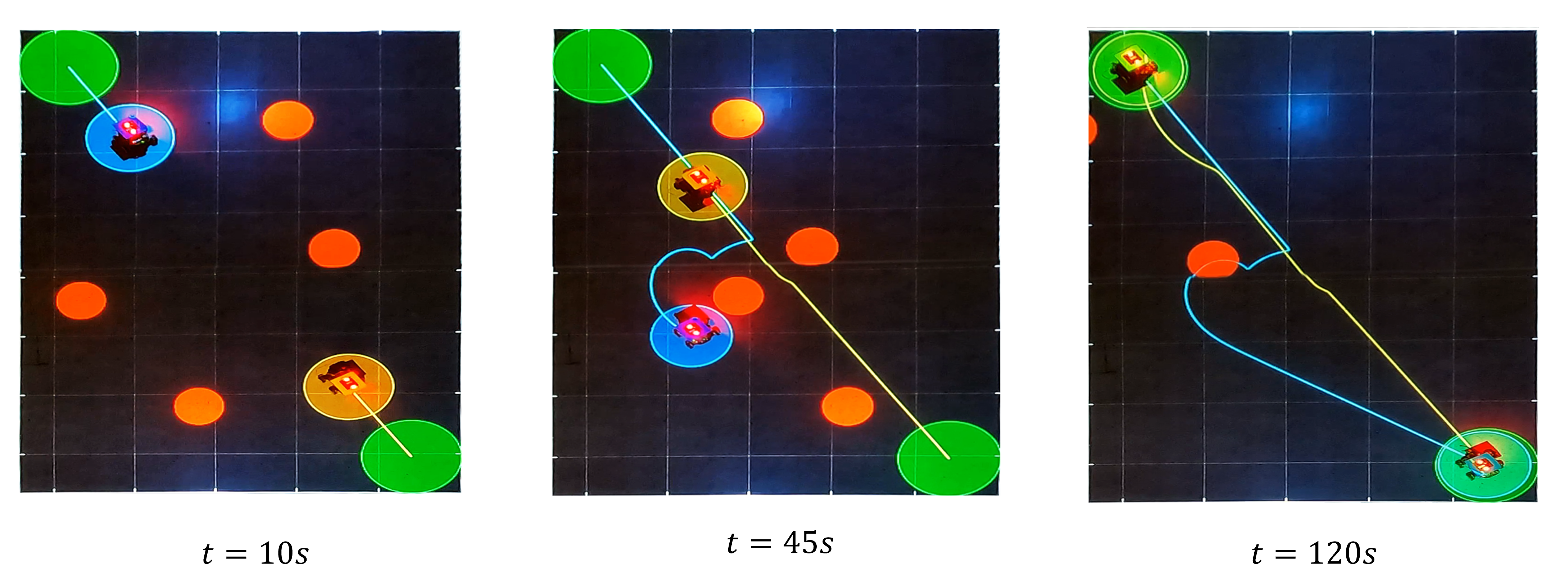}
        \caption{The blue robot is altruistic  $\soc^{(\text{blue})}=0.7$ and the yellow robot is egoistic  $\soc^{(\text{yellow})}=0.3$}
        \label{fig:subfig1}
    \end{subfigure}
    
    \vspace{1em} 

    \begin{subfigure}{\linewidth}
        \centering        \includegraphics[width=0.9\linewidth]{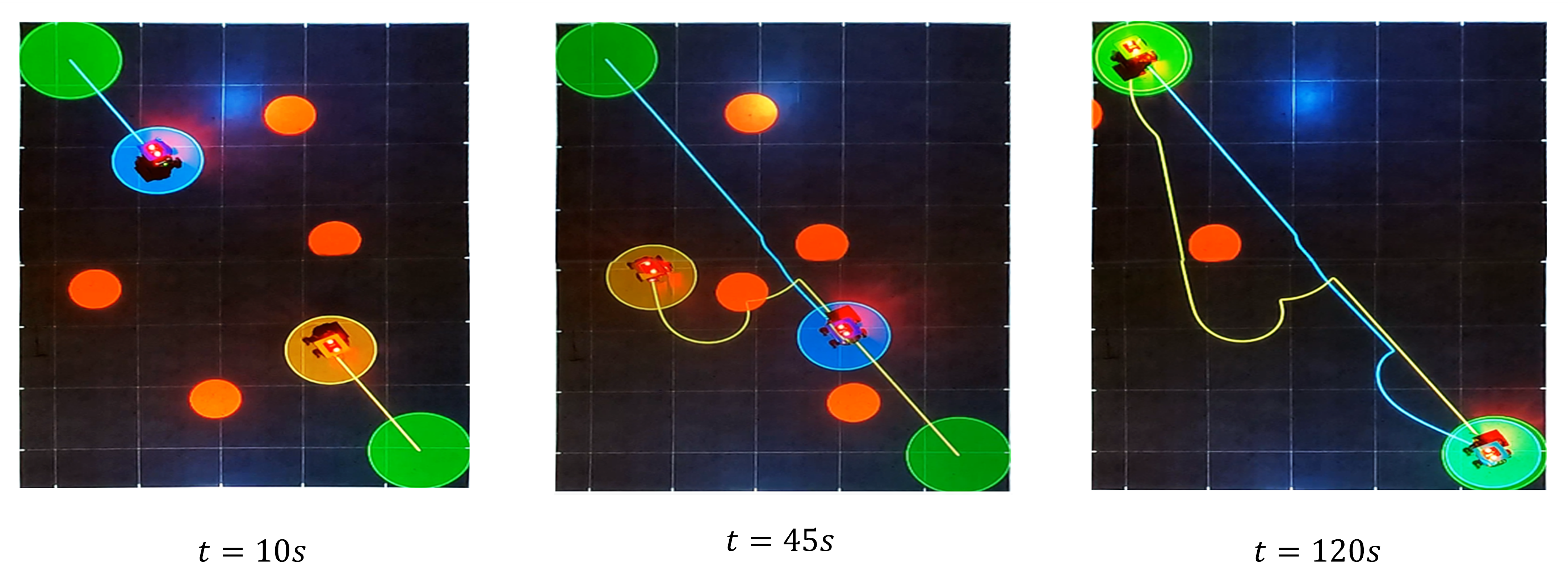}
        \caption{The blue robot is egoistic  $\soc^{(\text{blue})}=0.3$, and the yellow robot is altruistic  $\soc^{(\text{yellow})}=0.7$}
        \label{fig:subfig2}
    \end{subfigure}

    \caption{Hardware demonstration of two omnidirectional robots in a cluttered dynamic environment,\href{https://www.youtube.com/watch?v=oDo6Qs9vw7s}{Video}.}
    \label{fig:hardware}
\end{figure*}

\begin{figure*}
    \centering
    \includegraphics[width=0.9\linewidth]{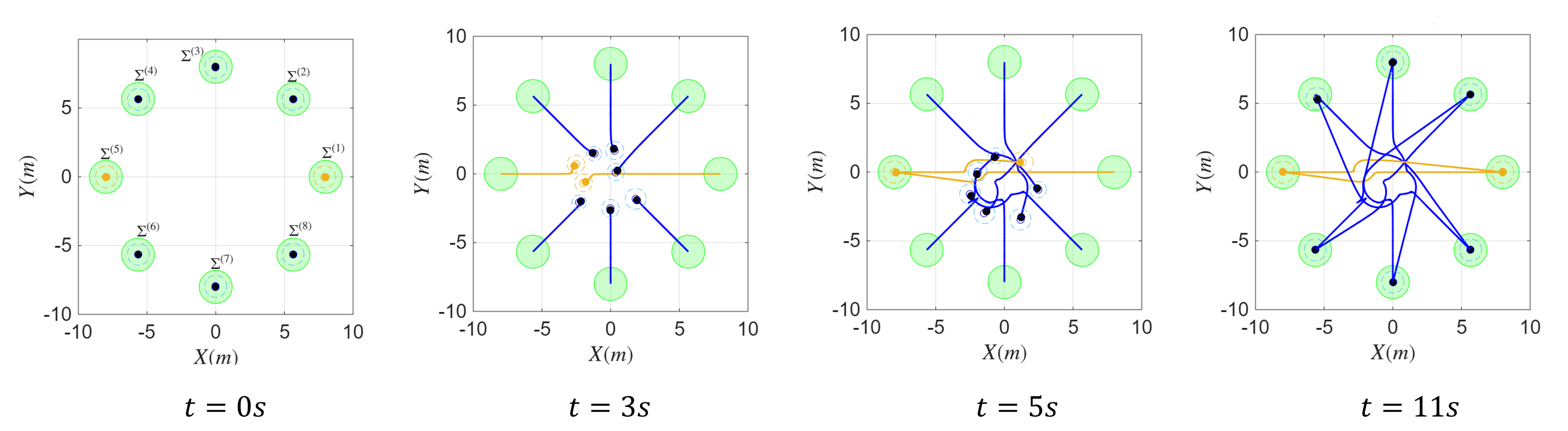}
    \caption{Simulation of eight omnidirectional mobile robots in a 2D environment with different prescribed times. Two egoistic agents ($\soc^{(1)} = \soc^{(5)} = 0.1$, yellow) interact with six altruistic agents ($\soc^{(2)} = \soc^{(3)} = \soc^{(4)} = \soc^{(6)} = \soc^{(7)} = \soc^{(8)} = 0.99$), demonstrating scalability to multiple agents, \href{https://www.youtube.com/watch?v=oDo6Qs9vw7s}{Video}.}
    \label{fig:2d}
\end{figure*}

\subsection{3D Unmanned Aerial Vehicle}
We consider a multi-agent UAV system with $n_a=8$ agents operating in a 3D environment, where each agent follows second-order dynamics adapted from~\cite{APF_drone}.

Each UAV starts from its assigned initial position and is tasked with swapping positions with the diagonally opposite agent. The prescribed time for agent $\Sigma^{(1)}$ is $t_c^{(1)} = 20\,\mathrm{s}$, while all other agents have $t_c^{(k)} = 25\,\mathrm{s}$. Agents with a lower social awareness index $\soc^{(1)} = \soc^{(5)} = 0.1$ are shown in yellow, whereas agents with a higher social awareness index $\soc^{(k)} = 0.99,\, k \in [1;8]\setminus\{1,5\}$ are shown in blue.

The tube radii are bounded by $\rad_{min}^{(k)} = 0.6$ and $\rad_{max}^{(k)} = 0.9$ for all agents. The center dynamics parameters in \eqref{eqn:cen_dynamic} are set as $h_1^{(1)} = 0.07,\, h_1^{(k)} = 0.06,\, k \in [2,8]$ and $\hat h_2^{(k,l)} = \hat h_3^{(k,l)} = 0.002,\, k \neq l$.

Figure~\ref{fig:3d} shows the trajectories and STTs of all agents at various time steps. All UAVs remain strictly within their tubes, successfully reach their targets, and avoid inter-agent collisions according to their social indices. A full simulation video is available at \href{https://www.youtube.com/watch?v=oDo6Qs9vw7s}{Link}\footnotemark[\value{footnote}].
\begin{figure*}
    \centering
    \includegraphics[width=0.9\linewidth]{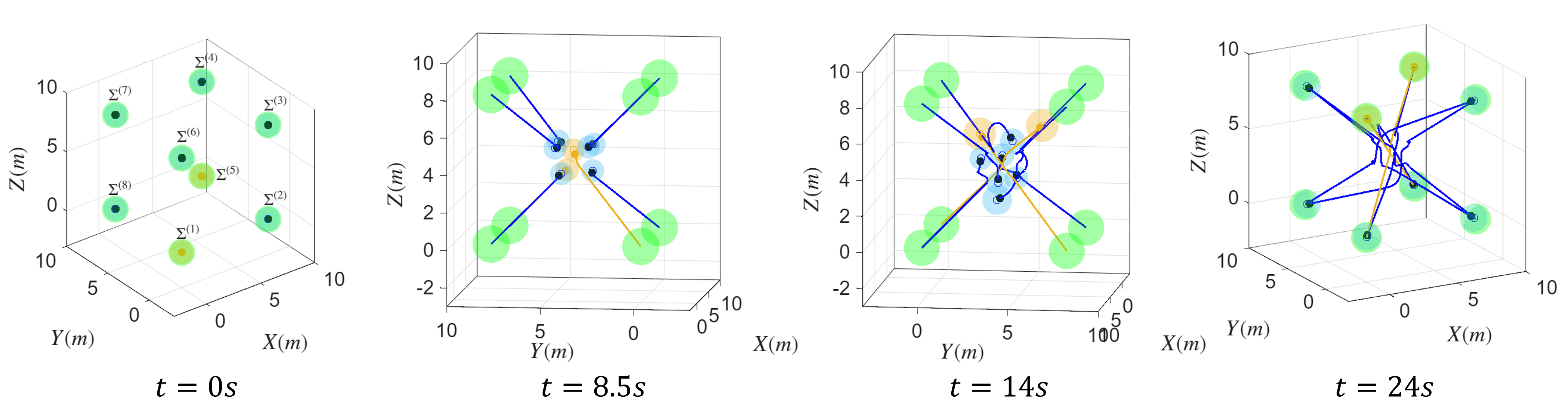}
    \caption{Simulation of eight UAVs in a 3D environment with different prescribed times. Two egoistic agents ($\soc^{(1)} = \soc^{(5)} = 0.1$, yellow) and six altruistic agents ($\soc^{(2)} = \soc^{(3)} = \soc^{(4)} = \soc^{(6)} = \soc^{(7)} = \soc^{(8)} = 0.99$) coordinate safely, showing scalability from ground robots to aerial systems, \href{https://www.youtube.com/watch?v=oDo6Qs9vw7s}{Video}.}
    \label{fig:3d}
\end{figure*}

\footnotetext{https://www.youtube.com/watch?v=oDo6Qs9vw7s}

\begin{figure*}
    \centering
    \includegraphics[width=\linewidth]{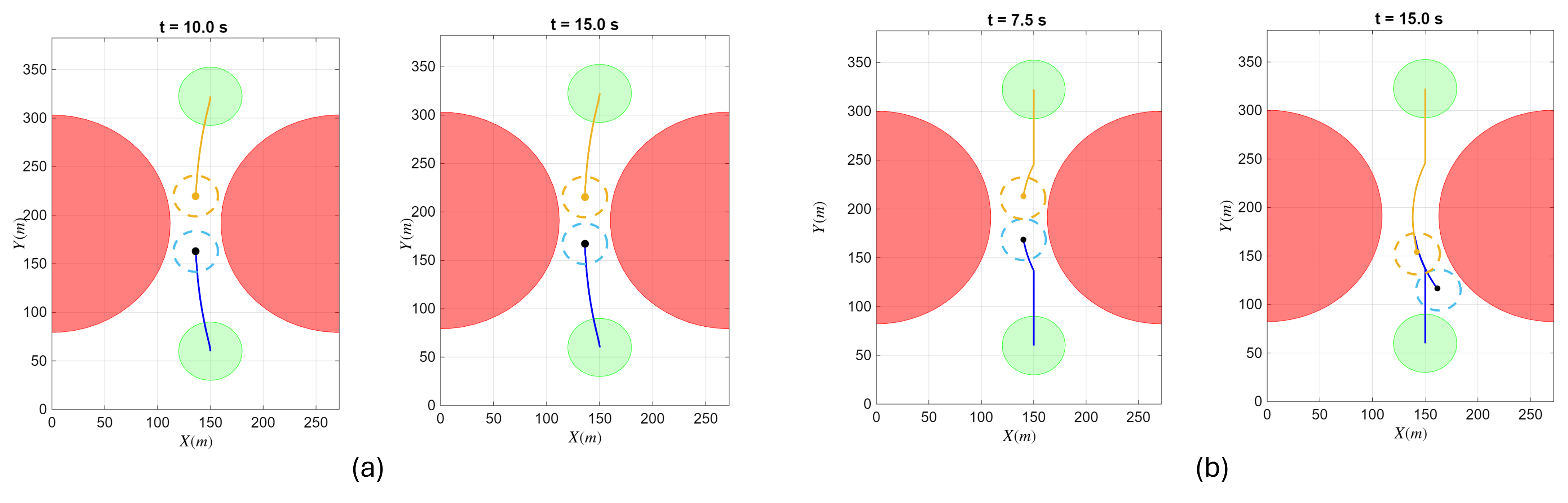}
    \caption{{Comparison of the performance between (a) Safety Barrier Certificates \cite{multi_barrier} and (b) proposed method in a narrow passage scenario, where only one robot can pass at a time.} \href{https://www.youtube.com/watch?v=oDo6Qs9vw7s}{Video}.}
    \label{fig:passage_comp}
\end{figure*}

\begin{table*}[t]
\centering
\begin{adjustbox}{max width=\textwidth}
\begin{threeparttable}
\caption{Comparing Spatiotemporal tubes with classical algorithms}
\begin{tabular}{lcccccccccccc}
\hline
\textbf{Algorithm} & \multicolumn{2}{c}{\textbf{\begin{tabular}[c]{@{}l@{}}Closed-form \\ Control \end{tabular}}} & \multicolumn{2}{c}{\textbf{\begin{tabular}[c]{@{}l@{}}Formal \\ Guarantee \end{tabular}}} & \multicolumn{2}{c}{\textbf{\begin{tabular}[c]{@{}l@{}}Unknown \\ Dynamics \end{tabular}}} & \multicolumn{2}{c}{\textbf{\begin{tabular}[c]{@{}l@{}}Social\\ Awareness \end{tabular}}} & \multicolumn{2}{c}{\textbf{\begin{tabular}[c]{@{}l@{}}Bounded \\ Disturbance \end{tabular}}} & \multicolumn{2}{c}{\textbf{\begin{tabular}[c]{@{}l@{}}Time  \\ Constraint \end{tabular}}} \\ \hline
Safety Barrier Certificates\cite{multi_barrier} & \multicolumn{2}{c}{\xmark} & \multicolumn{2}{c}{\cmark} & \multicolumn{2}{c}{\xmark} & \multicolumn{2}{c}{\xmark} & \multicolumn{2}{c}{\xmark} & \multicolumn{2}{c}{\xmark} \\
Distributed MPC\cite{Multi_MPC} \cite{dai2017distributed} & \multicolumn{2}{c}{\xmark} & \multicolumn{2}{c}{\xmark} & \multicolumn{2}{c}{\xmark} & \multicolumn{2}{c}{\xmark} & \multicolumn{2}{c}{\xmark} & \multicolumn{2}{c}{\xmark} \\ 
Negotiation Based STT \cite{multi_stt} & \multicolumn{2}{c}{\cmark} & \multicolumn{2}{c}{\cmark} & \multicolumn{2}{c}{\cmark} & \multicolumn{2}{c}{\xmark} & \multicolumn{2}{c}{\cmark} & \multicolumn{2}{c}{\cmark} \\
RA-CBF \cite{risk_aware} & \multicolumn{2}{c}{\xmark} & \multicolumn{2}{c}{\cmark} & \multicolumn{2}{c}{\xmark} & \multicolumn{2}{c}{\cmark} & \multicolumn{2}{c}{\xmark} & \multicolumn{2}{c}{\xmark} \\
WBVC \cite{pierson2020weighted} & \multicolumn{2}{c}{\xmark} & \multicolumn{2}{c}{\xmark} & \multicolumn{2}{c}{\xmark} & \multicolumn{2}{c}{\cmark} & \multicolumn{2}{c}{\xmark} & \multicolumn{2}{c}{\xmark} \\
Social MAPF \cite{chandra2023socialmapf} & \multicolumn{2}{c}{-\tnote{1}} & \multicolumn{2}{c}{\cmark} & \multicolumn{2}{c}{\xmark} & \multicolumn{2}{c}{\cmark} & \multicolumn{2}{c}{\xmark} & \multicolumn{2}{c}{\xmark} \\
SAMARL \cite{SAMRL} & \multicolumn{2}{c}{\xmark} & \multicolumn{2}{c}{\xmark} & \multicolumn{2}{c}{\cmark} & \multicolumn{2}{c}{\cmark} & \multicolumn{2}{c}{\cmark} & \multicolumn{2}{c}{\xmark}\\
\textbf{SA-MAS (proposed)} & \multicolumn{2}{c}{\cmark} & \multicolumn{2}{c}{\cmark} & \multicolumn{2}{c}{\cmark} & \multicolumn{2}{c}{\cmark} & \multicolumn{2}{c}{\cmark} & \multicolumn{2}{c}{\cmark} \\
\hline
\end{tabular}
\label{tab:comp}
\begin{tablenotes}
    \item [1] Additional mechanisms like PID and MPC are required for control.
  
\end{tablenotes}
\end{threeparttable}
\end{adjustbox}
\end{table*}

\subsection{{Comparative Study}}

{To evaluate the effectiveness of the proposed STT-based framework for Socially-Aware Multi-Agent Systems (SA-MAS), we perform both qualitative and quantitative analyses against existing state-of-the-art methods.
\\
\subsubsection{Qualitative Comparison}
We first provide a qualitative comparison of the proposed framework with existing multi-agent methods in Table~\ref{tab:comp}, including both socially-aware approaches and methods that do not explicitly account for social awareness. In particular, the table compares the proposed method with Safety Barrier Certificates \cite{multi_barrier}, Decentralized MPC, negotiation-based STT methods, RA-CBF \cite{risk_aware}, WBVC \cite{pierson2020weighted}, Social MAPF \cite{chandra2023socialmapf}, and SAMARL \cite{SAMRL}, highlighting their relative strengths and limitations in terms of computation efficiency, formal guarantees, robustness to unknown dynamics and disturbances, social awareness, and prescribed-time guarantees.
}
\begin{table*}[t]
\caption{{Comparison with baseline algorithms based on average computation time per step (ms).}}
\label{tab:comparison_quant}
\centering
\renewcommand{\arraystretch}{1.2}
\resizebox{\textwidth}{!}{%
\begin{tabular}{lcccccc}
\hline
\multirow{3}{*}{\textbf{Method}} &
\multicolumn{3}{c}{\textbf{2D Mobile Robot}} &
\multicolumn{3}{c}{\textbf{3D Quadrotor}} \\
\cline{2-7}
& $n_a$=5 & $n_a$=10 & $n_a$=20
& $n_a$=5 & $n_a$=10 & $n_a$=20 \\
\hline
RA-CBF \cite{risk_aware}
& $14.7\pm0.7$   & $75.24\pm4.9$    & $352.35\pm15.51$
& $25.79\pm1.102$  & $107.14\pm8.307$     & $476.13\pm64.108$ \\
WBVC \cite{pierson2020weighted}
& $0.09\pm0.009$ & $0.37\pm0.03$    & $2.09\pm0.09$
& $2.13 \pm 0.506$  & $5.89\pm 1.62$   & $13.56\pm 2.18$ \\
\textbf{Proposed}
& $\mathbf{0.06\pm0.0003}$ & $\mathbf{0.18\pm0.002}$    & $\mathbf{0.86\pm0.01}$
& $\mathbf{0.06\pm0.0028}$    & $\mathbf{0.16\pm 0.0137}$     & $\mathbf{0.64\pm0.0497}$ \\
\hline
\end{tabular}%
}
\end{table*}

\begin{figure*}[t]
    \centering
    \includegraphics[width=0.8\linewidth]{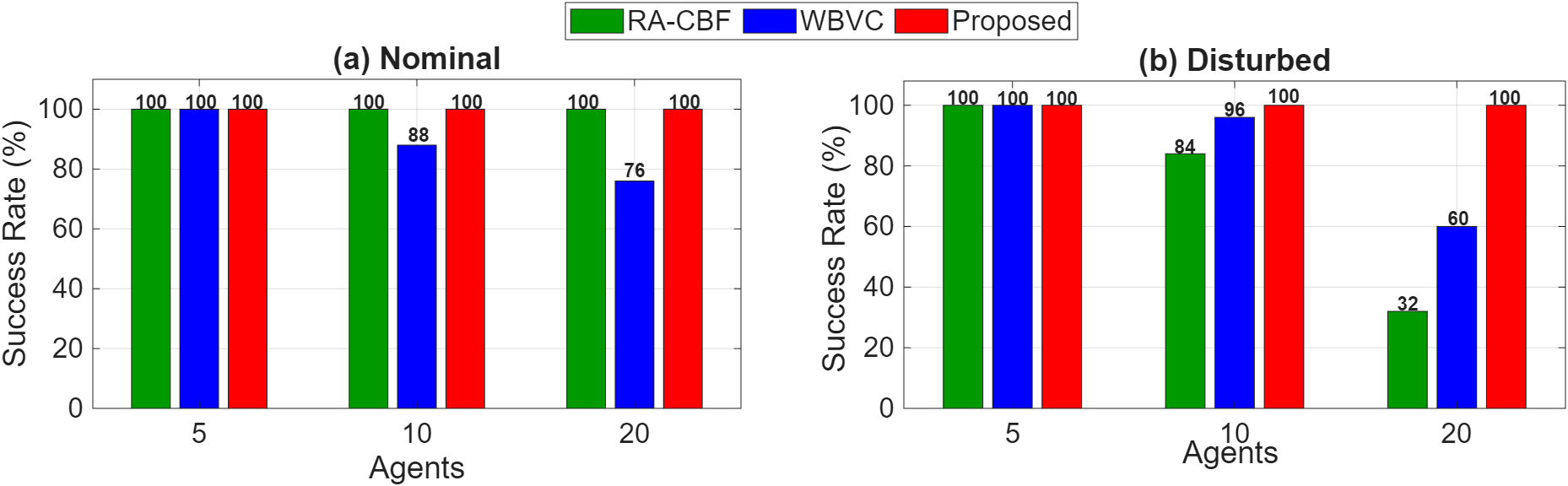}
    \caption{{Comparison of success rate for 2D mobile robot case with 25 randomized runs.}}
    \label{fig:succes_rate}
\end{figure*}

{
\subsubsection{Experimental Scenarios}
The experimental evaluation is divided into two parts to highlight the proposed framework's performance.
\\
\textbf{Part 1: Narrow passage scenario.} 
In the first scenario, we compare our socially-aware framework with Safety Barrier Certificates \cite{multi_barrier}, a representative standard baseline without explicit modeling of social awareness. As shown in Figure~\ref{fig:passage_comp}, two 2D mobile robots must swap positions through a narrow passage that only fits one agent at a time.  
Figure~\ref{fig:passage_comp}(a) shows the trajectories under Safety Barrier Certificates at $t=10\,\text{s}$ and $t=15\,\text{s}$. Both agents approach the passage and eventually get stuck. In contrast, with the proposed method and social awareness values $\soc^{yellow}=0.1$ and $\soc^{blue}=0.9$, the more altruistic blue agent takes a detour, allowing the more egoistic yellow agent to pass first, and then proceeds to its own target, as shown in Figure~\ref{fig:passage_comp}(b). A complete simulation video is available at \href{https://youtu.be/UnNlvu0TdtI}{Link}.}\footnote{https://www.youtube.com/watch?v=UnNlvu0TdtI}

{\textbf{Part 2: Quantitative performance study.}
In the second part of the study, we evaluate our method against socially-aware baselines RA-CBF \cite{risk_aware}, and WBVC \cite{pierson2020weighted}. The evaluation is performed over 25 runs for both 2D mobile robots and 3D quadrotors, with the number of agents chosen as $n_a = 5, 10,$ and $20$. In each run, agents are assigned the task of swapping positions. For the 2D case, initial positions are sampled within a circle of random radius, while for the 3D case, they are sampled within a cube of random side length. The methods are evaluated using two metrics: average computation time per step (in ms) and success rate, defined as the percentage of collision-free runs, under both nominal conditions and bounded disturbances.
The results in Table~\ref{tab:comparison_quant} and Figure~\ref{fig:succes_rate} highlight the following advantages:
\begin{enumerate}
    \item \textbf{Computational Efficiency:} As reported in Table~\ref{tab:comparison_quant}, the proposed method achieves the lowest average computation time per step, mainly due to its closed-form tube synthesis and control design.
    \item \textbf{Scalability:} Table~\ref{tab:comparison_quant} also reports that as the system dimension increases from 2D to 3D, and as the number of agents grows, the proposed framework scales more favorably than the baseline methods.
    \item \textbf{Safety and Robustness:} Figure~\ref{fig:succes_rate} shows that the proposed approach maintains a $100\%$ success rate across all tested scenarios, both under nominal conditions and in the presence of bounded disturbances. In contrast, the performance of RA-CBF and WBVC degrades under disturbances.
    \item \textbf{Prescribed-Time Guarantee:} Unlike the baseline methods, the proposed framework guarantees task completion within the prescribed time, independent of the number of agents or the choice of controller gains.
\end{enumerate}
Direct quantitative comparisons with Social MAPF \cite{chandra2023socialmapf} and SAMARL \cite{SAMRL} are not included, since they are not directly comparable to the setting considered in this work. In particular, Social MAPF is a planning-based approach that operates in discrete state and action spaces and requires an additional tracking controller for execution, while SAMARL is a reinforcement learning–based method that requires extensive offline training and does not provide formal guarantees.
}

\section{Conclusion}
In this work, we propose a real-time spatiotemporal tube-based framework to address the reach-avoid problem for multi-agent systems with unknown dynamics.  Unlike existing approaches that neglect social awareness and assume symmetric interactions, our method accounts for the varying social awareness of individual agents, leading to asymmetric interaction modeling. The framework synthesizes tubes in real time and ensures safe avoidance of unsafe regions during the execution of the assigned task. We provide theoretical guarantees by showing that tubes of each agent never intersect with any other agents and also that they never intersect with any moving obstacles and reach the target within the prescribed time. The resulting control law, which constrains each agent's system output within the STTs, is closed-form and approximation-free. Finally, the effectiveness and scalability of the method were demonstrated through simulations, hardware experiments and comparison studies with multi agent system's of 2D mobile robots and  3D drone navigating through cluttered and dynamic environments.

{While the framework is effective, there are some limitations that will be addressed in future work. First, although in our hardware experiments the control inputs remained within feasible limits, the current framework does not explicitly enforce hard input constraints. In practice, actuator saturation may affect the ability to strictly maintain the "stay inside the tube" guarantee and achieve the desired prescribed-time performance. Extending the framework to explicitly handle input constraints and derive feasibility conditions relating prescribed time, obstacle velocities, and actuator limits is an important direction for future work.
}
Second, we consider predefined social indices capturing the behavior of each agent. Future work may explore optimizing these values using game-theoretic methods to achieve more adaptive and balanced interactions. 
{While in this work, the STT is modeled as a ball in the $n$-dimensional output space, future work will consider extending this representation to asymmetric shapes, such as ellipsoids or polytopes.}
\bibliographystyle{unsrt} 
\bibliography{sources} 

\appendix

\end{document}